\documentclass{article}

\usepackage{microtype}
\usepackage{graphicx}
\usepackage{subcaption}
\usepackage{booktabs} 
\usepackage{multirow}
\usepackage{hyperref}
\usepackage{float}        
\usepackage{placeins}

\usepackage[preprint]{icml2026}

\usepackage{amsmath}
\usepackage{amssymb}
\usepackage{mathtools}
\usepackage{amsthm}
\usepackage{enumitem}

\usepackage[capitalize,noabbrev]{cleveref}

\theoremstyle{plain}
\newtheorem{theorem}{Theorem}[section]
\newtheorem{proposition}[theorem]{Proposition}
\newtheorem{lemma}[theorem]{Lemma}

\theoremstyle{definition}
\newtheorem{definition}[theorem]{Definition}

\theoremstyle{remark}

\usepackage[textsize=tiny]{todonotes}
\usepackage{algorithmic}
\usepackage{algorithm}

\begin{document}

\twocolumn[
  \icmltitle{Information-Theoretic Privacy Control for Sequential Multi-Agent LLM Systems}



  \icmlsetsymbol{equal}{*}

  \begin{icmlauthorlist}
    \icmlauthor{Sadia Asif}{sch}
    \icmlauthor{Mohammad Mohammadi Amiri}{sch}
  \end{icmlauthorlist}

  \icmlaffiliation{sch}{Department of Computer Science, Rensselaer Polytechnic Institute, Troy New York}

  \icmlcorrespondingauthor{Sadia Asif}{asifs@rpi.edu}
  \icmlcorrespondingauthor{Mohammad Mohammadi Amiri}{ mamiri@rpi.edu}


  \vskip 0.3in
]




\printAffiliationsAndNotice{}

\begin{abstract}
Sequential multi-agent large language model (LLM) systems are increasingly
deployed in sensitive domains such as healthcare, finance, and enterprise
decision-making, where multiple specialized agents collaboratively process a
single user request. Although individual agents may satisfy local privacy
constraints, sensitive information can still be inferred through sequential
composition and intermediate representations. In this work, we study
\emph{compositional privacy leakage} in sequential LLM agent pipelines. We
formalize leakage using mutual information and derive a theoretical bound that
characterizes how locally introduced leakage can amplify across agents under
sequential execution. Motivated by this analysis, we propose a
privacy-regularized training framework that directly constrains information flow
between agent outputs and agent-local sensitive variables. We evaluate our
approach across sequential agent pipelines of varying depth on three benchmark
datasets, demonstrating stable optimization dynamics and consistent,
interpretable privacy-utility trade-offs. Our results show that privacy in
agentic LLM systems cannot be guaranteed by local constraints alone and must
instead be treated as a system-level property during both training and
deployment.
\end{abstract}

\section{Introduction}

Large language models (LLMs) are increasingly deployed as \emph{agentic systems},
where complex user requests are handled not by a single monolithic model, but by
multiple interacting agents with specialized roles.
Such multi-agent architectures enable modular reasoning, tool use, delegation,
and specialization, and are now widely adopted in applications spanning medical
decision support, financial analysis, and enterprise automation
\citep{pan2025swegym, xiao2024tradingagents, marreed2025enterprise}.

In such multi-agent systems, agents can interact in several structural modes, including
\emph{parallel} collaboration (agents operating independently and aggregating
results), \emph{hierarchical} orchestration (controllers delegating subtasks to
workers), and \emph{sequential} pipelines, where agents process information
stage-by-stage and pass intermediate representations downstream.
Each interaction pattern induces distinct failure modes and security risks.
In this work, we focus specifically on \emph{sequential multi-agent pipelines},
a common deployment pattern in real-world systems where intermediate reasoning
states, summaries, or tool outputs produced by one agent serve as input to the
next.

Despite their practical advantages, sequential agent pipelines introduce
\emph{privacy risks} that are not captured by traditional single-model threat
models.
In realistic deployments, individual agents often operate with
\emph{agent-local sensitive context}, such as proprietary policies, internal
routing logic, private memory states, or domain-specific confidential data.
Although downstream agents may not be granted direct access to this sensitive
information, intermediate messages and learned representations can retain
statistical dependencies that allow sensitive attributes to be inferred through
composition.
Consequently, even when each agent individually satisfies a local privacy
constraint, the system as a whole may exhibit substantial privacy leakage due to
sequential execution.

Prior research on privacy in LLMs has primarily focused on \emph{single-model}
settings, including memorization of training data
\citep{xu2025positional, nasr2025scalable},
membership inference attacks
\citep{meeus2025sok, mangaokar2025discrediting},
and differential privacy during training
\citep{kulynych2025unifying}.
While these approaches provide important guarantees at the model level, they do
not address privacy risks that arise \emph{at deployment time} from interactions
among multiple agents.
More recent work on multi-agent LLM security has examined adversarial prompt
propagation, trust miscalibration, and coordination failures
\citep{patil2025sum, zhang2025effective, yang2025agentnet,gomaa2025converse}.
However, privacy leakage induced by \emph{sequential composition of agent
representations} remains poorly understood and largely unquantified.
This gap is increasingly problematic as modern LLM deployments shift toward
agentic architectures, where intermediate representations, summaries, and tool
outputs are explicitly reused across multiple stages.
In such systems, privacy risk no longer stems solely from what a single agent
reveals, but from how information is \emph{propagated, transformed, and amplified} across agent boundaries.

Existing privacy techniques are ill-suited to capture this phenomenon.
Model-level guarantees (e.g., memorization bounds or differential privacy) treat
agents in isolation, while heuristic defenses such as sanitization or access
control reason only about surface-level disclosures.
Neither perspective accounts for \emph{latent statistical dependencies} embedded
in intermediate representations, nor for the cumulative effect of downstream
processing.
As a result, systems may satisfy strong local privacy constraints at each agent,
yet still exhibit substantial \emph{global leakage} once outputs are composed
sequentially.
Addressing this challenge requires a framework that (i) quantifies privacy at
the level of intermediate representations rather than final outputs, and (ii) provides
compositional guarantees that scale with pipeline depth. In this paper, we are addressing this gap by using \emph{information-theoretic perspective} to study
privacy leakage in sequential LLM agent pipelines.

\paragraph{Contributions.}
Our main contributions are summarized as follows:
\setlength{\itemsep}{1pt}
\setlength{\parskip}{1pt}
\setlength{\parsep}{1pt}
\begin{itemize}[topsep=2pt,itemsep=1pt,parsep=0pt,partopsep=0pt]
    \item We formalize \emph{compositional privacy leakage} in sequential
    multi-agent LLM systems and show that local per-agent privacy constraints
    are insufficient to guarantee global privacy.
    \item We derive a theoretical bound characterizing how information leakage
    amplifies across sequential agent pipelines under standard Markov
    assumptions.
    \item We propose a privacy-regularized training framework that directly
    constrains mutual information (MI) between intermediate agent representations
    and agent-local sensitive variables.
    \item We provide extensive empirical evaluation across medical, financial,
    and action-based privacy benchmarks, demonstrating consistent and
    interpretable privacy-utility trade-offs.
\end{itemize}

\section{Related Work}

A large body of work has studied privacy risks in LLMs, primarily
focusing on memorization-based attacks and training-data leakage.
Prior studies demonstrate that LLMs can regurgitate sensitive sequences from
their training data through extraction and membership inference attacks
\citep{shokri2017membership, carlini2022membership, ippolito2023false},
with memorization scaling with model size and data repetition.
More recent work has expanded this scope to inference-time privacy violations,
where models infer sensitive attributes from prompts or contextual cues even
when such information was not explicitly memorized
\citep{staab2024anonymizers,wu2025promptleakage}.
While effective at characterizing single-model risks, these approaches largely
assume centralized interaction and do not account for privacy leakage emerging
from multi-step or distributed reasoning.

In parallel, research on contextual privacy emphasizes that privacy is not an
absolute property but depends on socially appropriate information flows
\citep{bagdasarian2024airgapagent}.
Benchmarks such as PrivacyLens operationalize this view by evaluating whether
model actions violate contextual integrity under realistic scenarios
\citep{shao2024privacylens}.
However, these approaches evaluate privacy primarily at the level of final
actions or decisions, leaving open how intermediate representations contribute
to downstream leakage when composed across multiple agents.

The rise of multi-agent LLM systems further complicates this landscape \citep{zhang2025searching}.
Agentic frameworks decompose complex tasks into sequences of specialized agents,
often sharing intermediate representations, tool outputs, or memory states
\citep{talebirad2023multiagent, khan2025generative, prabhakar2025apigenmt}.
Although such decomposition improves task performance, recent studies reveal
that agent communication and memory mechanisms introduce new attack surfaces,
including memory poisoning and cross-agent inference
\citep{gomaa2025converse, rosser2025agentbreeder}.
Existing defenses focus on sanitization, prompt filtering, or architectural
isolation, but lack principled guarantees against leakage arising from sequential
composition.
Our work differs from prior literature in two key respects.
First, we study privacy leakage as a \emph{compositional phenomenon}, showing that
even benign, locally safe agent outputs can jointly reveal sensitive attributes
when aggregated.
Second, we provide an information-theoretic framework that quantifies this
leakage at the representation level and yields formal guarantees on cumulative
risk.
A more detailed discussion of memorization attacks, contextual integrity,
agent memory vulnerabilities, and their limitations is provided in
Appendix~\ref{app:related_work_extended}.
\section{Problem Setup}
We now formalize privacy leakage in sequential multi-agent LLM systems from an
information-theoretic perspective.
Our focus is on understanding how sensitive information held locally by
individual agents propagates through intermediate representations and
accumulates under sequential composition.
We begin by introducing the system model and threat assumptions, then define a
notion of compositional privacy leakage that captures system-level exposure
arising from multi-agent execution.
This formulation provides the foundation for our theoretical analysis of leakage
amplification and motivates the privacy-regularized training framework developed
in subsequent sections.

\subsection{Sequential Agent Pipeline}

We consider a sequential multi-agent system in which a single user request is
processed through a pipeline of $N$ specialized agents,
$\{a_1, \ldots, a_N\}$.
The request is first handled by agent $a_1$, whose output is passed to $a_2$, and
so on, until agent $a_N$ produces the final system response.

Let $O_0 \triangleq \varnothing$ denote an empty initial input.
For each agent $a_i$, we define:
\begin{itemize}[topsep=2pt,itemsep=1pt,parsep=0pt,partopsep=0pt]
    \item $D_i$ as public or task-relevant inputs to agent $a_i$,
    \item $S_i$ as \emph{agent-local sensitive information} accessible only to
    agent $a_i$ and $S_{<i} = \{S_1, \dots, S_{i-1}\}$,
    \item $O_i$ as the output representation produced by agent $a_i$.
\end{itemize}

Agent execution follows the sequential update rule
\begin{equation}
O_i = A_i(O_{i-1}, D_i, S_i), \qquad i = 1, \ldots, N,
\end{equation}
where $O_{i-1}$ denotes the output received from the preceding agent.
The final system output is given by $O_N$.

\vspace{4pt}
\paragraph{Conditional Independence Structure.}
We assume that each agent’s output depends on its local sensitive context only
through its own computation, and that downstream agents do not access upstream
sensitive variables directly.
Formally, the sequential pipeline satisfies the conditional Markov structure
\begin{equation}
\label{eq:markov}
S_{i-1} \rightarrow (O_{i-1}, D_i, S_i) \rightarrow O_i \rightarrow O_{i+1}
\rightarrow \cdots \rightarrow O_N ,
\end{equation}
for each $i \in \{1,\ldots,N\}$.
\paragraph{Interpretation.}
Eq. ~\eqref{eq:markov} formalizes the assumption that sensitive information
introduced at earlier stages influences downstream behavior only through the
intermediate representations that are explicitly passed along the pipeline.
Conditioned on the information available to agent $a_i$—namely the upstream
output $O_{i-1}$, task-relevant inputs $D_i$, and its own local sensitive context
$S_i$—the agent’s output $O_i$ is independent of upstream sensitive variables
$S_{i-1}$.
As a result, any leakage of upstream sensitive information into later outputs
must occur indirectly via sequential transformation and reuse of intermediate
representations, rather than through direct access to private agent state.
Although $S_i$ is never explicitly transmitted, statistical dependencies between
$S_i$ and the agent’s output $O_i$ can induce privacy leakage.
Because intermediate outputs are reused by downstream agents, such leakage may
propagate and accumulate through the pipeline via sequential transformations.

To quantify this phenomenon, we adopt an information-theoretic perspective and
measure privacy leakage using MI.
We define the \emph{global compositional leakage} as
\begin{equation}
\label{eq:global_leakage}
\mathcal{L}_{\text{global}}
\;\triangleq\;
I\!\left( O_N \,;\, S_1, \ldots, S_N \right),
\end{equation}
which captures how much information about all agent-local sensitive variables $\{S_i\}_{i=1}^N$ can
be inferred from the final system output $O_N$.
This definition naturally accounts for indirect leakage arising from sequential
composition, even when no individual agent directly discloses its sensitive
context.
\subsection{Threat Model}

We assume an adversary that observes the final output $O_N$ and seeks to infer
information about the sensitive variables $\{S_i\}_{i=1}^N$. The adversary does
not have direct access to agent internals, intermediate representations, or
private contexts, but can exploit statistical correlations induced by sequential
execution and representation passing.

This threat model captures both external observers accessing system outputs and
internal agents attempting to infer upstream sensitive information from shared
representations.
\subsection{Privacy Objective}

Our objective is to design learning and execution mechanisms that limit
compositional privacy leakage while preserving task utility.
Formally, we aim to control the global leakage
$\mathcal{L}_{\text{global}}$ subject to maintaining performance on the target
task.

Achieving this objective requires understanding how privacy loss arises from
the interaction of agents through shared intermediate representations, rather
than from any single agent in isolation.
In the next section, we formally analyze how privacy leakage behaves under
sequential composition and characterize the fundamental limits of agent-local
privacy control.

\section{Theoretical Analysis for Compositional Privacy Leakage}
\label{sec:theoretical-analysis}

We now provide a formal analysis of privacy leakage in sequential multi-agent
systems.
Our goal is to characterize how information about agent-local sensitive
variables propagates through intermediate representations and accumulates at
the system output.

We show that even when each agent individually satisfies a bounded local privacy
constraint, the sequential reuse and transformation of representations can
induce significantly larger leakage at the system level.
This theoretical analysis reveals a fundamental gap between agent-local privacy guarantees
and end-to-end privacy behavior in composed agent pipelines.

\subsection{Per-Agent Leakage}

For each agent $a_i$, we quantify \emph{local leakage} as the MI
between the agent’s output representation $O_i$ and its agent-local sensitive
variable $S_i$:
\begin{equation}
I(O_i; S_i).
\end{equation}
A natural design goal is to enforce a per-agent privacy constraint of the form
\begin{equation}
\label{eq:local_leakage}
I(O_i; S_i) \leq \epsilon_i,
\end{equation}
where $\epsilon_i \ge 0$ denotes a user-specified privacy budget for agent
$a_i$.

Such constraints ensure that no single agent output reveals excessive
information about its own sensitive context. However, as we show next, local
control alone is insufficient to bound the privacy risk of the composed system.
\subsection{Sequential Composition and Leakage Amplification}

Our analysis focuses on how the global compositional leakage defined in
Eq.~\eqref{eq:global_leakage} evolves under sequential execution.
Although each agent may satisfy a local information constraint, dependencies
introduced at early stages are repeatedly transformed by downstream agents,
leading to amplified exposure at the final output.
We show that this accumulation effect can dominate system-level behavior,
rendering agent-local guarantees insufficient in deep pipelines.

\vspace{4pt}
\paragraph{Local vs.\ Global Leakage.}
While per-agent leakage bounds defined in Eq.~\eqref{eq:local_leakage} control direct disclosure at individual stages,
they do not characterize how information propagates across the pipeline.
In sequential settings, correlations introduced upstream are transformed and
reused by downstream agents, allowing sensitive dependence to accumulate beyond
what any single agent reveals in isolation.
As a result, system-level privacy behavior can differ qualitatively from
agent-level guarantees.
\begin{theorem}[Cumulative Leakage Bound]
\label{thm:cumulative}
Consider a sequential agent pipeline satisfying the Markov structure as given in Eq.~\eqref{eq:markov}
and assume that the sensitive variables $\{S_i\}_{i=1}^N$ are mutually
independent. If each agent satisfies the local leakage constraint
$I(O_i; S_i) \leq \epsilon_i$, then the global compositional leakage satisfies
\begin{equation}
I(O_N; S_1, \ldots, S_N)
\;\leq\;
\sum_{i=1}^{N} 2^{\,N-i} \, \epsilon_i .
\end{equation}
\end{theorem}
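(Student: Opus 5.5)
The plan is to argue by induction on the pipeline depth using the partial leakages
\[
L_k \triangleq I(O_k; S_1,\ldots,S_k), \qquad k=1,\ldots,N,
\]
with $L_0 = 0$ because $O_0=\varnothing$. The target is a one-step recursion of the form $L_k \le 2L_{k-1} + \epsilon_k$, which unrolls to $L_N \le \sum_{i=1}^N 2^{N-i}\epsilon_i$. Since $L_N = I(O_N;S_1,\ldots,S_N)$, this is the claim. In fact I expect the argument to give the stronger additive recursion $L_k \le L_{k-1} + \epsilon_k$. This implies the stated bound because $2^{N-i}\ge 1$ and all terms are nonnegative, so the factor $2$ would only be slack.

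\textbf{One-step recursion.} Write $S_{<k}=(S_1,\ldots,S_{k-1})$ and apply the chain rule:
\[
L_k = I(O_k;S_k) + I(O_k;S_{<k}\mid S_k) \le \epsilon_k + I(O_k;S_{<k}\mid S_k),
\]
where the inequality is the local constraint~\eqref{eq:local_leakage}.

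For the conditional term, the first step uses independence of $S_k$ and $S_{<k}$:
\[
I(O_k;S_{<k}\mid S_k) = I(O_k,S_k;S_{<k}) - I(S_k;S_{<k}) = I(O_k,S_k;S_{<k}).
\]
The second step uses the sequential structure~\eqref{eq:markov}: $O_k$ depends on the upstream sensitive variables only through $(O_{k-1},D_k,S_k)$. The data-processing inequality then gives
\[
I(O_k,S_k;S_{<k}) \le I(O_{k-1},D_k,S_k;S_{<k}).
\]
The third step removes $D_k$ and $S_k$. Since $D_k$ is a public or task input and $S_k$ is agent-local, I take $(D_k,S_k)$ to be independent of $S_{<k}$ given $O_{k-1}$. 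The chain rule then reduces the right-hand side to $I(O_{k-1};S_{<k}) = L_{k-1}$. Combining the three steps yields $L_k \le \epsilon_k + L_{k-1}$.

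\textbf{Main obstacle.} The hard part is justifying the conditional-independence steps from the hypotheses exactly as stated. Eq.~\eqref{eq:markov} is written with $S_{i-1}$ only, not the whole past $S_{<i}$, and it says nothing explicit about how $D_k$ relates to upstream secrets. I would first read~\eqref{eq:markov} as intended for all upstream sensitive variables, which is the paper's stated interpretation that upstream information enters only through passed representations. I would also make explicit that $D_k$ and $S_k$ are independent of $(O_{k-1},S_{<k})$, beyond the mutual independence of the $S_i$.

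If only the weaker reading is available, I would instead aim directly for the factor $2$. The idea is to bound the conditional term by $L_{k-1}$ plus a second copy of the upstream leakage. That second copy would come from the interaction information $I(S_k;S_{<k}\mid O_{k-1})$, which can be nonzero even though $S_k\perp S_{<k}$. The goal would be to control it by another $L_{k-1}$, giving $L_k \le 2L_{k-1}+\epsilon_k$. This route is where I expect the real difficulty, since such synergy terms are not bounded by marginal leakages in general; XOR-type examples show this. So it would need the Markov structure to do the work.
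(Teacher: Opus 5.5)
Your argument is correct under the strengthened independence you make explicit. It takes a different and sharper route than the paper.

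\textbf{The paper's route.} The paper expands $I(O_N;S_1,\ldots,S_N)=\sum_i I(O_N;S_i\mid S_{<i})$. It pushes each term back to $\mathcal{L}_i=I(O_i;S_i\mid S_{<i})$ by conditional data processing along the downstream chain, and splits $\mathcal{L}_i=I(O_i;S_i)+I(S_i;S_{<i}\mid O_i)$. It bounds the cross term by $I(S_{<i};O_{i-1})$ (Lemma~\ref{lem:upstream}), and then bounds that by $\sum_{j<i}\mathcal{L}_j$. This gives $U_i\le 2U_{i-1}+\epsilon_i$ for $U_i=\sum_{j\le i}\mathcal{L}_j$.

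\textbf{Your route.} You run the induction on the joint leakage $L_k=I(O_k;S_1,\ldots,S_k)$ instead. Your key step $I(O_k;S_{<k}\mid S_k)\le I(O_{k-1};S_{<k})$ is exactly the inequality proved inside Lemma~\ref{lem:upstream}. Applied to $L_k$, it closes the recursion with coefficient $1$. The paper's factor $2$ comes from discarding $-I(S_{<i};O_i)$ in that lemma. As a result, upstream leakage is counted once inside $\mathcal{L}_i$ and again in $U_{i-1}$.

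\textbf{What your route buys.}
\begin{itemize}
\item It gives the linear bound $\sum_i\epsilon_i$, which implies the stated one. This shows the weights $2^{N-i}$, and the early-agent-dominance reading of Proposition~\ref{lem:early}, are artifacts of the paper's proof rather than consequences of the hypotheses.
\item It needs only a one-step condition at each stage. It never uses the multi-step chains $S_i\to O_i\to O_m$ given $S_{<i}$ ($m>i$) that the paper invokes in Eq.~\eqref{eq:cdpi_applied} and again in deriving Eq.~\eqref{eq:recurrence}.
\item It is shorter.
\end{itemize}
The only thing the paper's decomposition adds is the per-stage terms $\mathcal{L}_i$, and it pays for them with the exponential weights.

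\textbf{On your ``main obstacle.''} Your concern is legitimate, and the paper's proof silently makes essentially the same strengthening. Lemma~\ref{lem:upstream} uses the chain $S_{<i}\to O_{i-1}\to O_i$ given $S_i$, together with $I(S_{<i};O_{i-1}\mid S_i)=I(S_{<i};O_{i-1})$. Together these amount to $(O_i,S_i)\perp S_{<i}\mid O_{i-1}$. That condition already suffices for your step, because $I(O_k,S_k;S_{<k})\le I(O_{k-1};S_{<k})+I(O_k,S_k;S_{<k}\mid O_{k-1})$. So you need not assume anything about $D_k$.

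\textbf{Drop the fallback.} Under the literal reading of Eq.~\eqref{eq:markov}, no bound of this form holds, with or without the factor $2$. Take $N=2$ with:
\begin{itemize}
\item independent uniform bits $S_1,S_2$;
\item constant $D_1$ and $O_1$;
\item $D_2=S_1$ and $O_2=D_2\oplus S_2$.
\end{itemize}
Every conditional independence in Eq.~\eqref{eq:markov} holds as written, and $I(O_1;S_1)=I(O_2;S_2)=0$. Yet $I(O_2;S_1,S_2)=1$ bit. So the extra independence must be stated as a hypothesis; the synergy term cannot be controlled otherwise.
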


\paragraph{Interpretation.}
Theorem~\ref{thm:cumulative} (proof  given in Appendix~\ref{app:proof_cumulative})
reveals an inherent \emph{amplification effect} under sequential composition:
leakage introduced by earlier agents is repeatedly transformed and propagated by
downstream agents, resulting in an exponentially growing contribution to the
final output.
In the uniform case $\epsilon_i = \epsilon$, the bound reduces to
$(2^{N}-1)\epsilon$, indicating that even modest per-agent leakage can yield
substantial global exposure as the pipeline depth increases.
Although this result is stated as an upper bound, our experimental results in
Section~\ref{sec:leakage-amplification} empirically exhibit the same depth-dependent
amplification trend, confirming that the bound captures a practically relevant
failure mode rather than a purely vacuous worst case.
\paragraph{Early-Agent Dominance.}
The structure of the bound implies that leakage introduced by earlier agents
contributes exponentially more to the final leakage than leakage introduced by
later agents. This asymmetry motivates stronger privacy control at early stages
of the pipeline and is empirically validated in our ablation studies provided in Appendix ~\ref{app:ablations}.
\begin{proposition}[Early-Agent Leakage Dominance]
\label{lem:early}
For fixed $\{\epsilon_i\}_{i=1}^N$, the marginal contribution of $\epsilon_i$ to
$I(O_N; S_1, \ldots, S_N)$ scales as $2^{N-i}$, implying that leakage at earlier
agents dominates global compositional leakage.
\end{proposition}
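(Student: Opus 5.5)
The plan is to read Proposition~\ref{lem:early} as a statement about the upper bound of Theorem~\ref{thm:cumulative}, not about the exact mutual information. The true quantity $I(O_N;S_1,\ldots,S_N)$ is generally not a differentiable function of the budgets $\{\epsilon_i\}$. I would therefore define the marginal contribution of $\epsilon_i$ as the partial derivative (equivalently, the increment under $\epsilon_i \mapsto \epsilon_i+\delta$) of the worst-case certified leakage
\[
B(\epsilon_1,\ldots,\epsilon_N) \triangleq \sum_{j=1}^{N} 2^{\,N-j}\epsilon_j .
\]
Under the hypotheses of Theorem~\ref{thm:cumulative} (the Markov structure in Eq.~\eqref{eq:markov}, mutually independent $S_i$, and local constraints $I(O_j;S_j)\le\epsilon_j$), $B$ upper-bounds $I(O_N;S_1,\ldots,S_N)$. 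So it suffices to analyze how $B$ responds to each $\epsilon_i$.

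The steps, in order, are as follows. First, invoke Theorem~\ref{thm:cumulative} to get $I(O_N;S_1,\ldots,S_N)\le B$. Second, observe that $B$ is linear in the budgets. Hence $\partial B/\partial\epsilon_i = 2^{N-i}$, and for any $\delta\ge 0$ we have $B(\ldots,\epsilon_i+\delta,\ldots)-B(\ldots,\epsilon_i,\ldots)=2^{N-i}\delta$, independent of the other budgets. This gives the claimed $2^{N-i}$ scaling. Third, compare agents $i<k$: the ratio of their marginal contributions is $2^{N-i}/2^{N-k}=2^{k-i}\ge 2$. Fourth, for the dominance claim in the uniform case $\epsilon_j=\epsilon$, note that the first agent alone contributes $2^{N-1}\epsilon$. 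This strictly exceeds half of the total bound $(2^N-1)\epsilon$. More generally, any prefix $\{1,\ldots,m\}$ contributes $\sum_{j\le m}2^{N-j}\epsilon$, which dominates the suffix $\sum_{j>m}2^{N-j}\epsilon = (2^{N-m}-1)\epsilon < 2^{N-m}\epsilon$. This formalizes the statement that early agents dominate.

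The main obstacle is conceptual rather than computational. Theorem~\ref{thm:cumulative} only gives an upper bound, so one cannot claim that the actual leakage increases by exactly $2^{N-i}\delta$. I would handle this by stating the proposition explicitly at the level of the certified bound, i.e.\ the coefficient with which $\epsilon_i$ enters the worst-case guarantee. For a matching lower-order statement, I would go back to the recursion behind the proof of Theorem~\ref{thm:cumulative}, of the form $L_i \le 2L_{i-1}+\epsilon_i$ with $L_i \triangleq I(O_i;S_1,\ldots,S_i)$. Unrolling it by induction shows that $\epsilon_i$ is propagated through exactly $N-i$ doubling steps, which explains the $2^{N-i}$ factor structurally. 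I would not attempt to prove tightness of the exact mutual information.
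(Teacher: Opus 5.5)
Your core argument is correct and matches the paper. The paper gives no separate proof of Proposition~\ref{lem:early}: it reads off $2^{N-i}$ as the coefficient of $\epsilon_i$ in the bound of Theorem~\ref{thm:cumulative}. Your explicit restriction to the certified bound $B$, together with the prefix/suffix arithmetic in your fourth step, is a correct and more careful version of that reading.

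You should correct or drop your last paragraph; the proof does not rely on it.

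\emph{The paper's recursion is for a different quantity.} The paper's doubling recursion is $U_i \le \epsilon_i + 2U_{i-1}$ for $U_i=\sum_{j\le i} I(O_j;S_j\mid S_{<j})$. It is not a recursion for $L_i = I(O_i;S_1,\ldots,S_i)$.

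\emph{Your $L_i$ obeys a linear recursion.} For $L_i$, the ingredients already used in Lemma~\ref{lem:upstream} give something strictly better than doubling. By the chain rule, $L_i = I(O_i;S_i) + I(O_i;S_{<i}\mid S_i)$. The conditional Markov chain $S_{<i}\to O_{i-1}\to O_i$ given $S_i$, plus independence of $S_i$ from the upstream variables, gives $I(O_i;S_{<i}\mid S_i)\le I(O_{i-1};S_{<i}\mid S_i) = L_{i-1}$. Hence $L_i \le \epsilon_i + L_{i-1}$, and unrolling gives $I(O_N;S_1,\ldots,S_N)\le \sum_i \epsilon_i$.

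\emph{The linear bound is tight.} It is attained, e.g.\ by $O_i=(O_{i-1},Z_i)$, where $Z_i$ is a noisy copy of $S_i$ (noise absorbed into $D_i$) with $I(Z_i;S_i)=\epsilon_i$. This pipeline satisfies the Markov structure and meets each local constraint with equality. So the supremum of the global leakage over admissible pipelines is exactly $\sum_i\epsilon_i$, and every budget enters the tight worst case with coefficient $1$.

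\emph{Consequence for the proposition.} The $2^{N-i}$ factor is slack in the proof of Theorem~\ref{thm:cumulative}, not a structural propagation effect. Your recursion therefore does not ``explain the factor structurally.'' Reading the proposition as a statement about the coefficients of $B$ is not merely cautious; it is the only reading under which the statement holds. Present the ``early-agent dominance'' of your fourth step as a property of that bound, not of the leakage itself.
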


\subsection{Design Implications}

Theorem~\ref{thm:cumulative} demonstrates that enforcing independent,
per-agent privacy constraints does not suffice to control privacy risk in
sequential pipelines. Instead, privacy must be treated as a system-level
property that explicitly accounts for downstream amplification effects.
This observation motivates learning objectives that directly regulate
information flow throughout the pipeline, rather than relying on post hoc
filtering or isolated local defenses.

In the next section, we introduce a privacy-regularized training framework that
operationalizes this principle by explicitly penalizing MI
between agent outputs and agent-local sensitive variables during training.

\section{Privacy-Regularized Sequential Agent Training}

We now describe a training framework that explicitly controls privacy leakage in
sequential multi-agent systems. Motivated by the theoretical analysis of compositional privacy leakage
in Section~\ref{thm:cumulative}, our approach introduces an information-theoretic
regularization term that constrains the dependence between each agent’s output
and its agent-local sensitive context during training.

\subsection{Training Objective}

Our objective is to preserve task utility while limiting information leakage at
each stage of the agent pipeline. Let $y$ denote the ground-truth task label. The final agent output $O_N$ is used
to compute a standard supervised utility loss
\begin{equation}
\mathcal{L}_{\text{utility}} = \ell(O_N, y),
\end{equation}
where $\ell(\cdot)$ denotes a standard task loss (e.g., cross-entropy).

To control privacy leakage, we penalize the MI between each
agent’s output and its corresponding sensitive variable:
\begin{equation}
\mathcal{L}_{\text{privacy}}
=
\sum_{i=1}^{N} \beta_i \, I(O_i; S_i),
\end{equation}
where $\beta_i \ge 0$ determines the privacy-utility trade-off at agent $a_i$. The resulting optimization problem is
\begin{equation}
\label{eq:total_loss}
\mathcal{L}_{\text{total}}
=
\mathcal{L}_{\text{utility}}
+
\mathcal{L}_{\text{privacy}}.
\end{equation}

\subsection{Mutual Information Estimation}

Direct computation of the MI $I(O_i; S_i)$ is intractable for
high-dimensional neural representations produced by large language models.
We therefore adopt a variational estimation approach based on the
Donsker--Varadhan (DV) representation of the Kullback--Leibler (KL) divergence,
as instantiated by Mutual Information Neural Estimation (MINE)
\citep{belghazi2018mine}.

\paragraph{Donsker--Varadhan Representation.}
For two distributions $P$ and $Q$ defined on a common support $\Omega$, the
KL-divergence admits the variational form
\begin{equation}
D_{\text{KL}}(P \,\|\, Q)
=
\sup_{T:\Omega \times \Omega \rightarrow \mathbb{R}}
\left\{
\mathbb{E}_{P}[T]
-
\log \mathbb{E}_{Q}\!\left[e^{T}\right]
\right\},
\label{eq:dv-representation}
\end{equation}
where the supremum is taken over all measurable functions $T$ for which the
expectations are finite.

By setting $P = p(o_i, s_i)$ and $Q = p(o_i)p(s_i)$, the MI
between an agent’s output $O_i$ and its sensitive variable $S_i$ can be written
as
\begin{align}
I(O_i; S_i)
&= D_{\text{KL}}\!\big(p(o_i, s_i) \,\|\, p(o_i)p(s_i)\big) \nonumber \\
&= \sup_{T \in \mathcal{F}} \Biggl\{
\mathbb{E}_{p(o_i,s_i)}[T(o_i,s_i)] \nonumber \\
&\quad - \log \mathbb{E}_{p(o_i)p(s_i)}\big[e^{T(o_i,s_i)}\big]
\Biggr\},
\label{eq:mine-objective}
\end{align}

where $\mathcal{F}$ denotes a parameterized function family.

\paragraph{Neural Estimator Formulation.}
In practice, we restrict $\mathcal{F}$ to a family of neural networks
$T_{\psi_i} : \mathcal{O}_i \times \mathcal{S}_i \rightarrow \mathbb{R}$,
where $\mathcal{O}_i$ denotes the intermediate representation produced by agent $a_i$ and $\mathcal{S}_i$ denotes the corresponding agent-local sensitive variable,
referred to as \emph{critics}, with parameters $\psi_i$.
Given a minibatch of size $B$, we obtain empirical joint samples
$\{(o_i^{(b)}, s_i^{(b)})\}_{b=1}^{B}$ from $p(o_i,s_i)$, and approximate samples
from the product of marginals $p(o_i)p(s_i)$ by pairing $o_i^{(b)}$ with
shuffled sensitive variables $s_i'^{(b)}$ within the minibatch.

The resulting sample-based MINE estimator is

\begin{equation}
\begin{aligned}
\hat{I}_{\text{MINE}}(O_i; S_i)
&= \frac{1}{B} \sum_{b=1}^{B} T_{\psi_i}(o_i^{(b)}, s_i^{(b)}) \\
&\hspace*{-1.5em}-\log \Bigg(
\frac{1}{B} \sum_{b=1}^{B} \exp\big(T_{\psi_i}(o_i^{(b)}, s_i'^{(b)})\big)
\Bigg).
\end{aligned}
\label{eq:mine-estimate}
\end{equation}

This estimator provides a differentiable bound on the true mutual
information and serves as a quantitative measure of privacy leakage at agent
$a_i$.
\paragraph{Optimization and Stability.}
During training, the critic parameters $\psi_i$ are updated to 
\emph{maximize} $\hat{I}_{\text{MINE}}(O_i; S_i)$ in Eq. ~\eqref{eq:mine-estimate}, while the agent parameters are updated to \emph{minimize} the same 
quantity as part of the overall privacy-regularized objective. 
To improve numerical stability of the logarithmic partition term, we maintain 
an exponential moving average of $\exp(T_{\psi_i})$, following standard and 
empirically validated MINE implementations \citep{belghazi2018mine}.

\subsection{Optimization Procedure}

Training proceeds iteratively and alternates between estimating mutual
information and updating the agent parameters to minize $\mathcal{L}_{\text{total}}$ given in Eq.~\eqref{eq:total_loss}. Each iteration consists of three
phases: (i) a forward pass through the agent pipeline to obtain  $\mathcal{L}_{\text{utility}}$,
(ii) updating the MINE critics to obtain accurate estimates of 
$\hat{I}_{\text{MINE}}(O_i; S_i)$, and
(iii) updating the agent parameters by minimizing the combined objective
$\mathcal{L}_{\text{total}}$, which balances task utility and privacy
regularization.
The complete training procedure is summarized in
Algorithm~\ref{alg:mine-training}.

\begin{algorithm}
\caption{Information-Regularized Multi-Agent Training with MINE}
\label{alg:mine-training}
\begin{algorithmic}[1]
\STATE Initialize agent parameters $\{\theta_i\}_{i=1}^{N}$ and estimator parameters $\{\psi_i\}_{i=1}^{N}$.
\FOR{iteration $t = 1$ to $T$}
    \STATE \textbf{Phase 1: Forward Pass}
    \STATE Compute sequential outputs: $O_1 \to O_2 \to \dots \to O_N$.
    \STATE Compute utility loss: $\mathcal{L}_{\text{utility}} = \ell(O_N, y)$.

    \STATE \textbf{Phase 2: MINE Estimator Update}
    \FOR{each agent $a_i$}
        \STATE Sample joint pairs $(o_i, s_i) \sim p(o_i, s_i)$.
        \STATE Sample marginal pairs $(o_i, s_i')$ by shuffling $S_i$.
        \STATE Estimate $\hat{I}_{\text{MINE}}(O_i; S_i)$ using Eq. \eqref{eq:mine-estimate}.
        \STATE Update $\psi_i \leftarrow \psi_i + \eta_\psi \nabla_{\psi_i} \hat{I}_{\text{MINE}}(O_i; S_i)$.
    \ENDFOR

    \STATE \textbf{Phase 3: Agent Update}
    \STATE Compute total loss:\\ $\mathcal{L}_{\text{total}} = \mathcal{L}_{\text{utility}} + \sum_i \beta_i \hat{I}_{\text{MINE}}(O_i; S_i)$.
    \STATE Update agent parameters: $\theta_i \leftarrow \theta_i - \eta_\theta \nabla_{\theta_i} \mathcal{L}_{\text{total}}$.
\ENDFOR
\end{algorithmic}
\end{algorithm}

This optimization enforces an explicit information bottleneck at each agent
boundary. As a result, agents learn representations that are sufficient for task
completion while minimizing the propagation of agent-local sensitive context
through the sequential pipeline.

\section{Experimental Setup}

We evaluate our privacy-regularized training framework on sequential
multi-agent LLM systems across multiple domains.
Our experiments are designed to isolate how privacy leakage and task utility
scale with agent depth, model capacity, and task domain.

\paragraph{Benchmarks.}
We consider three privacy-critical benchmarks:
\emph{MedQA} \citep{jin2020disease} for medical reasoning,
\emph{FinQA} \citep{chen2021finqa} for financial numerical reasoning,
and \emph{PrivacyLens} \citep{shao2024privacylens} for action-based evaluation
of contextual privacy norms.
Together, these benchmarks cover classification-style reasoning,
numerical program execution, and realistic agent trajectories where privacy
violations may emerge only through sequential composition.

\paragraph{Sequential Agent Pipelines.}
We evaluate pipelines of $N\in\{2,3,4,5\}$ agents executed sequentially.
Each agent is an independent LLM that communicates exclusively via intermediate
representations, reflecting common real-world deployments where reasoning,
planning, and policy enforcement are handled by distinct components.
\paragraph{Models.}
Experiments are conducted using open-weight LLMs from the Qwen (2B, 4B) and
LLaMA (3B, 7B) families,\footnote{
Qwen models: \url{https://huggingface.co/Qwen/Qwen2-1.5B},
\url{https://huggingface.co/Qwen/Qwen1.5-4B-Chat}.
LLaMA models: \url{https://huggingface.co/meta-llama/Llama-3.2-3B},
\url{https://huggingface.co/meta-llama/Llama-2-7b}.
}
fine-tuned with LoRA adapters \citep{hu2021lora}.
Agents do not share parameters unless explicitly stated.

\paragraph{Training and Baselines.}
We compare standard end-to-end training against privacy-regularized training
that penalizes MI between intermediate agent representations
and agent-local sensitive variables.
Details of the objective and sensitive variable definitions are
provided in Appendix~\ref{app:exp_details}.
\begin{table}[t]
\centering
\caption{MedQA results for LLaMA models (B=Baseline, M=MINE-Reg); Ag = number of agents.}
\label{tab:llama_models}
\scriptsize
\begin{tabular}{@{}l @{} c @{} c c c c c c c @{}}
\toprule
\textbf{Model} & \textbf{Ag}\kern3pt & \textbf{Meth} &
\textbf{CE} $\downarrow$ &
\textbf{MI$_{\text{avg}}$} $\downarrow$ &
\textbf{SB} $\uparrow$ &
\textbf{BS} $\uparrow$ &
\textbf{PI} $\uparrow$ &
\textbf{PARI} $\uparrow$ \\

\midrule
\multirow{8}{*}{LLaMA-7B}
& 2 & B & 1.32 & 0.49 & 0.42 & 0.95 & 0.000 & 0.480 \\
& 2 & M & 1.47 & 0.06 & 0.81 & 0.89 & 0.878 & 0.901 \\
& 3 & B & 1.54 & 0.70 & 0.33 & 0.93 & 0.000 & 0.474 \\
& 3 & M & 1.68 & 0.08 & 0.78 & 0.87 & 0.886 & 0.899 \\
& 4 & B & 1.73 & 0.90 & 0.26 & 0.90 & 0.000 & 0.465 \\
& 4 & M & 1.92 & 0.10 & 0.74 & 0.84 & 0.889 & 0.891 \\
& 5 & B & 1.88 & 1.05 & 0.21 & 0.87 & 0.000 & 0.456 \\
& 5 & M & 2.06 & 0.14 & 0.70 & 0.81 & 0.867 & 0.871 \\
\midrule

\multirow{8}{*}{LLaMA-3B}
& 2 & B & 1.55 & 0.62 & 0.34 & 0.86 & 0.000 & 0.453 \\
& 2 & M & 1.72 & 0.18 & 0.66 & 0.78 & 0.710 & 0.784 \\
& 3 & B & 1.71 & 0.80 & 0.29 & 0.83 & 0.000 & 0.444 \\
& 3 & M & 1.89 & 0.22 & 0.62 & 0.77 & 0.725 & 0.788 \\
& 4 & B & 1.88 & 0.96 & 0.25 & 0.80 & 0.000 & 0.435 \\
& 4 & M & 2.05 & 0.27 & 0.58 & 0.74 & 0.719 & 0.776 \\
& 5 & B & 2.02 & 1.12 & 0.20 & 0.77 & 0.000 & 0.426 \\
& 5 & M & 2.21 & 0.33 & 0.54 & 0.71 & 0.705 & 0.761 \\
\bottomrule
\end{tabular}
\end{table}
\paragraph{Evaluation.}
Utility is measured using task-appropriate accuracy or helpfulness metrics,
while privacy is evaluated using MI estimates, adversarial
leakage accuracy, and composite privacy-utility scores.
Formal metric definitions are given in Appendix~\ref{app:metrics}.
\begin{figure}[t]
    \centering
    \includegraphics[width=0.9\columnwidth]{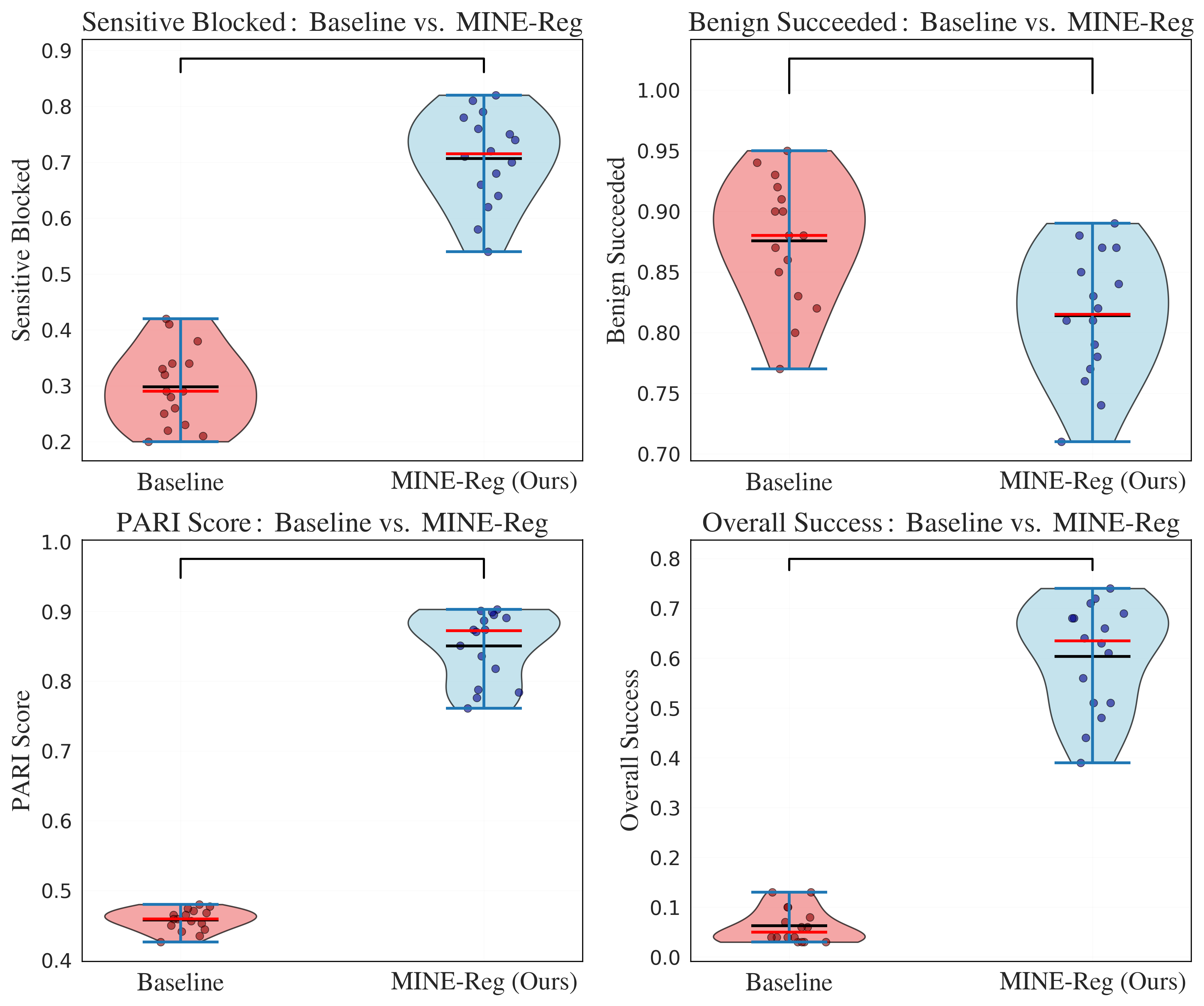}
    \caption{
Distributional comparison of baseline and MINE-Reg across privacy and utility metrics for LLaMA-3B.
Violin plots show per-run variability with mean (red bar) and dispersion for Sensitive Blocked, Benign Succeeded, PARI score, and Overall Success.
    }
    \label{fig:privacy_utility_tradeoff}
\end{figure}
\section{Results}
\label{sec:results}
In this section, we provide the results of our proposed information-theoretic regularization framework on
\textbf{medical} (MedQA) \citep{jin2020disease} and \textbf{financial} (FinQA) \citep{chen2021finqa} reasoning benchmarks,
focusing on privacy leakage, task utility, and their interaction under
sequential multi-agent composition.
All metrics follow the formal definitions given in Appendix~\ref{app:metrics}.
Results on the PrivacyLens \citep{shao2024privacylens}  benchmark are deferred to
Appendix~\ref{app:add-exp}.
\subsection{Overall Performance on Medical and Financial Benchmarks}
Tables \ref{tab:llama_models}, ~\ref{tab:qwen_models} report MedQA results, while
Tables \ref{tab:finqa_llama_models},~\ref{tab:finqa_qwen_models} report FinQA results.
Across both domains and all agent depths ($2$--$5$ agents), MI-regularized training
(MINE-Reg) consistently reduces compositional leakage and improves system-level
privacy outcomes, while preserving the majority of benign task performance.
Two consistent patterns emerge from the tables: (i) unregularized baselines exhibit
progressively worse privacy with increasing depth, and (ii) MINE-Reg flattens this
depth-induced degradation, yielding substantially higher privacy protection at comparable utility.
\begin{table}[t]
\centering
\caption{MedQA results for Qwen models (B=Baseline, M=MINE-Reg); Ag = number of agents.}
\label{tab:qwen_models}
\scriptsize

\begin{tabular}{@{}l @{} c @{} c c c c c c c @{}}
\toprule
\textbf{Model} & \textbf{Ag}\kern3pt & \textbf{Meth} &
\textbf{CE} $\downarrow$ &
\textbf{MI$_{\text{avg}}$} $\downarrow$ &
\textbf{SB} $\uparrow$ &
\textbf{BS} $\uparrow$ &
\textbf{PI} $\uparrow$ &
\textbf{PARI} $\uparrow$ \\

\midrule
\multirow{8}{*}{Qwen-4B}
& 2 & B & 1.40 & 0.54 & 0.38 & 0.94 & 0.000 & 0.477 \\
& 2 & M & 1.58 & 0.06 & 0.82 & 0.88 & 0.889 & 0.903 \\
& 3 & B & 1.62 & 0.73 & 0.32 & 0.92 & 0.000 & 0.471 \\
& 3 & M & 1.77 & 0.08 & 0.79 & 0.85 & 0.890 & 0.895 \\
& 4 & B & 1.75 & 0.88 & 0.28 & 0.90 & 0.000 & 0.465 \\
& 4 & M & 1.93 & 0.10 & 0.75 & 0.83 & 0.886 & 0.887 \\
& 5 & B & 1.89 & 1.02 & 0.22 & 0.88 & 0.000 & 0.459 \\
& 5 & M & 2.05 & 0.13 & 0.71 & 0.81 & 0.873 & 0.874 \\
\midrule

\multirow{8}{*}{Qwen-2B}
& 2 & B & 1.48 & 0.61 & 0.41 & 0.91 & 0.000 & 0.468 \\
& 2 & M & 1.55 & 0.10 & 0.76 & 0.87 & 0.836 & 0.874 \\
& 3 & B & 1.66 & 0.78 & 0.34 & 0.88 & 0.000 & 0.459 \\
& 3 & M & 1.74 & 0.14 & 0.72 & 0.82 & 0.821 & 0.851 \\
& 4 & B & 1.82 & 0.94 & 0.29 & 0.85 & 0.000 & 0.450 \\
& 4 & M & 1.93 & 0.18 & 0.68 & 0.79 & 0.809 & 0.836 \\
& 5 & B & 1.98 & 1.10 & 0.23 & 0.82 & 0.000 & 0.441 \\
& 5 & M & 2.10 & 0.23 & 0.64 & 0.76 & 0.791 & 0.818 \\
\bottomrule
\end{tabular}
\end{table}

\begin{table}[t]
\centering
\caption{FinQA results for LLaMA models (B=Baseline, M=MINE-Reg); Ag = number of agents.}
\label{tab:finqa_llama_models}
\scriptsize
\begin{tabular}{@{}l @{} c @{} c c c c c c c @{}}
\toprule
\textbf{Model} & \textbf{Ag}\kern3pt & \textbf{Meth} &
\textbf{CE} $\downarrow$ &
\textbf{MI$_{\text{avg}}$} $\downarrow$ &
\textbf{SB} $\uparrow$ &
\textbf{BS} $\uparrow$ &
\textbf{PI} $\uparrow$ &
\textbf{PARI} $\uparrow$ \\
\midrule
\multirow{8}{*}{LLaMA-7B}
& 2 & B & 1.38 & 0.51 & 0.40 & 0.94 & 0.000 & 0.481 \\
& 2 & M & 1.53 & 0.06 & 0.82 & 0.88 & 0.882 & 0.905 \\
& 3 & B & 1.61 & 0.72 & 0.31 & 0.90 & 0.000 & 0.472 \\
& 3 & M & 1.78 & 0.09 & 0.77 & 0.84 & 0.875 & 0.890 \\
& 4 & B & 1.79 & 0.93 & 0.25 & 0.89 & 0.000 & 0.466 \\
& 4 & M & 1.98 & 0.11 & 0.73 & 0.83 & 0.882 & 0.892 \\
& 5 & B & 1.95 & 1.10 & 0.20 & 0.86 & 0.000 & 0.458 \\
& 5 & M & 2.14 & 0.15 & 0.69 & 0.80 & 0.864 & 0.872 \\
\midrule

\multirow{8}{*}{LLaMA-3B}
& 2 & B & 1.62 & 0.69 & 0.31 & 0.85 & 0.000 & 0.449 \\
& 2 & M & 1.78 & 0.21 & 0.69 & 0.77 & 0.696 & 0.778 \\
& 3 & B & 1.79 & 0.87 & 0.26 & 0.82 & 0.000 & 0.441 \\
& 3 & M & 1.96 & 0.26 & 0.64 & 0.75 & 0.701 & 0.781 \\
& 4 & B & 1.95 & 1.02 & 0.22 & 0.79 & 0.000 & 0.433 \\
& 4 & M & 2.13 & 0.31 & 0.60 & 0.72 & 0.696 & 0.768 \\
& 5 & B & 2.10 & 1.18 & 0.18 & 0.76 & 0.000 & 0.425 \\
& 5 & M & 2.28 & 0.37 & 0.56 & 0.69 & 0.686 & 0.755 \\
\bottomrule
\end{tabular}
\end{table}

\begin{table}[t]
\centering
\caption{FinQA results for Qwen models (B=Baseline, M=MINE-Reg); Ag = number of agents.}
\label{tab:finqa_qwen_models}
\scriptsize
\begin{tabular}{@{}l @{} c @{} c c c c c c c @{}}
\toprule
\textbf{Model} & \textbf{Ag}\kern3pt & \textbf{Meth} &
\textbf{CE} $\downarrow$ &
\textbf{MI$_{\text{avg}}$} $\downarrow$ &
\textbf{SB} $\uparrow$ &
\textbf{BS} $\uparrow$ &
\textbf{PI} $\uparrow$ &
\textbf{PARI} $\uparrow$ \\
\midrule
\multirow{8}{*}{Qwen-4B}
& 2 & B & 1.46 & 0.57 & 0.36 & 0.93 & 0.000 & 0.479 \\
& 2 & M & 1.63 & 0.07 & 0.80 & 0.87 & 0.877 & 0.900 \\
& 3 & B & 1.69 & 0.75 & 0.31 & 0.91 & 0.000 & 0.473 \\
& 3 & M & 1.84 & 0.09 & 0.77 & 0.84 & 0.880 & 0.892 \\
& 4 & B & 1.82 & 0.92 & 0.27 & 0.88 & 0.000 & 0.464 \\
& 4 & M & 1.99 & 0.12 & 0.73 & 0.81 & 0.870 & 0.878 \\
& 5 & B & 1.96 & 1.08 & 0.21 & 0.85 & 0.000 & 0.455 \\
& 5 & M & 2.15 & 0.16 & 0.69 & 0.79 & 0.852 & 0.863 \\
\midrule

\multirow{8}{*}{Qwen-2B}
& 2 & B & 1.54 & 0.64 & 0.34 & 0.90 & 0.000 & 0.468 \\
& 2 & M & 1.62 & 0.12 & 0.76 & 0.85 & 0.812 & 0.858 \\
& 3 & B & 1.71 & 0.82 & 0.28 & 0.87 & 0.000 & 0.459 \\
& 3 & M & 1.80 & 0.16 & 0.72 & 0.80 & 0.805 & 0.842 \\
& 4 & B & 1.88 & 0.98 & 0.24 & 0.84 & 0.000 & 0.451 \\
& 4 & M & 1.98 & 0.21 & 0.68 & 0.77 & 0.786 & 0.827 \\
& 5 & B & 2.03 & 1.15 & 0.19 & 0.81 & 0.000 & 0.442 \\
& 5 & M & 2.15 & 0.26 & 0.64 & 0.74 & 0.774 & 0.812 \\
\bottomrule
\end{tabular}
\end{table}

MINE-Reg achieves large reductions in $\mathrm{MI}_{\text{avg}}$ relative to the baseline,
typically in the \textbf{75--90\%} range depending on depth and model scale.
On the MedQA benchmark, for Qwen-4B with two agents, $\mathrm{MI}_{\text{avg}}$ drops from $0.54$ to $0.06$
($\approx 89\%$ reduction), and remains low even as the pipeline deepens (e.g., $0.13$ at five agents
versus a baseline of $1.02$).
Similarly on the FinQA benchmark, LLaMA-7B reduces $\mathrm{MI}_{\text{avg}}$ from $0.51$ to $0.06$ at two agents
($\approx 88\%$), and from $1.10$ to $0.15$ at five agents ($\approx 86\%$).
These reductions translate into consistent gains in adversarial failure rates:
Sensitive Blocked (SB) increases by roughly \textbf{+0.35 to +0.50} across most configurations.
For example, on MedQA with Qwen-4B, SB improves from $0.38 \rightarrow 0.82$ at two agents and
from $0.22 \rightarrow 0.71$ at five agents; on FinQA with LLaMA-7B, SB improves from
$0.40 \rightarrow 0.82$ at two agents and from $0.20 \rightarrow 0.69$ at five agents.
Notably, the privacy integrity (PI) component mirrors these trends, rising from $0$ for baselines
(by definition) to $\approx 0.69$--$0.89$ under MINE-Reg, indicating substantial suppression
of sensitive dependence.

\paragraph{Utility Preservation.}
Despite strong privacy improvements, benign accuracy remains comparatively stable under
regularization. Across MedQA, Benign Succeeded (BS) typically decreases by \textbf{6--8 points} for larger models
(e.g., Qwen-4B: $0.94 \rightarrow 0.88$ at two agents; LLaMA-7B: $0.95 \rightarrow 0.89$),
and by \textbf{8--10 points} for smaller models (e.g., LLaMA-3B: $0.86 \rightarrow 0.78$).
On FinQA, benign success is similarly preserved, with LLaMA-7B decreasing from
$0.94 \rightarrow 0.88$ at two agents and $0.86 \rightarrow 0.80$ at five agents.
Cross-entropy (CE) increases modestly under MINE-Reg (reflecting the expected
privacy-utility trade-off), but the degradation remains bounded and does not eliminate the gains
in system-level reliability.

\paragraph{Privacy-Aware Reasoning Index (PARI).}
The composite metric PARI improves substantially across all settings, reflecting simultaneous gains
in privacy integrity and preserved reasoning correctness.
For Qwen-4B on MedQA, PARI increases from $0.477 \rightarrow 0.903$ at two agents and from
$0.459 \rightarrow 0.874$ at five agents; for LLaMA-7B on FinQA, PARI increases from
$0.481 \rightarrow 0.905$ at two agents and from $0.458 \rightarrow 0.872$ at five agents.
These improvements indicate that MI-regularization yields a consistently better operating point for
privacy-sensitive sequential deployments, particularly as agent depth increases.

\subsection{Privacy-Utility Trade-off Analysis}

Figure~\ref{fig:privacy_utility_tradeoff} provides a distributional comparison
between unregularized baselines and MINE-Reg for Llama-3B across varying agent depths on the MedQA benchmark.
Rather than emphasizing individual configurations, this analysis highlights
system-level trends aggregated across experimental settings.

\paragraph{Privacy Gains.}
The top-left panel shows the distribution of SB.
Baseline systems cluster in the low-privacy regime
($\text{SB} \approx 0.20$--$0.40$), indicating widespread vulnerability under
sequential composition.
In contrast, MINE-Reg shifts the distribution upward, with median SB near
$0.70$ and upper quartiles exceeding $0.80$, demonstrating consistent privacy
improvements across models and depths.

\paragraph{Utility Preservation.}
The top-right panel reports BS.
While MINE-Reg introduces a modest utility reduction, performance remains
concentrated in the high-accuracy regime
($\text{BS} \approx 0.78$--$0.88$).
The substantial overlap with baseline distributions confirms that privacy gains
are not achieved via catastrophic utility loss, consistent with the $5$--$9$
point BS drops observed in Tables ~\ref{tab:llama_models},~\ref{tab:qwen_models},\ref{tab:finqa_llama_models} and ~\ref{tab:finqa_qwen_models}.

\paragraph{Composite System Performance.}
The bottom-left panel reports the PARI.
Baseline systems cluster around $0.44$--$0.48$, reflecting persistent leakage
despite strong benign accuracy.
MINE-Reg shifts PARI sharply upward, with medians near $0.87$--$0.90$,
indicating a strictly better privacy-utility operating point.

\paragraph{Overall Success (OS).}
Similarly, the bottom-right panel shows OS, which requires both
correct benign behavior and successful privacy protection.
Baseline systems fail almost universally ($\text{OS} < 0.10$), whereas
MINE-Reg concentrates mass between $0.55$ and $0.70$. This sharp separation indicates that privacy regularization not only mitigates
leakage but also enables reliable end-to-end task completion under safety constraints.

\subsection{Leakage Amplification with Agent Depth}
\label{sec:leakage-amplification}
\begin{figure}[t]
    \centering
    \includegraphics[width=0.9\columnwidth]{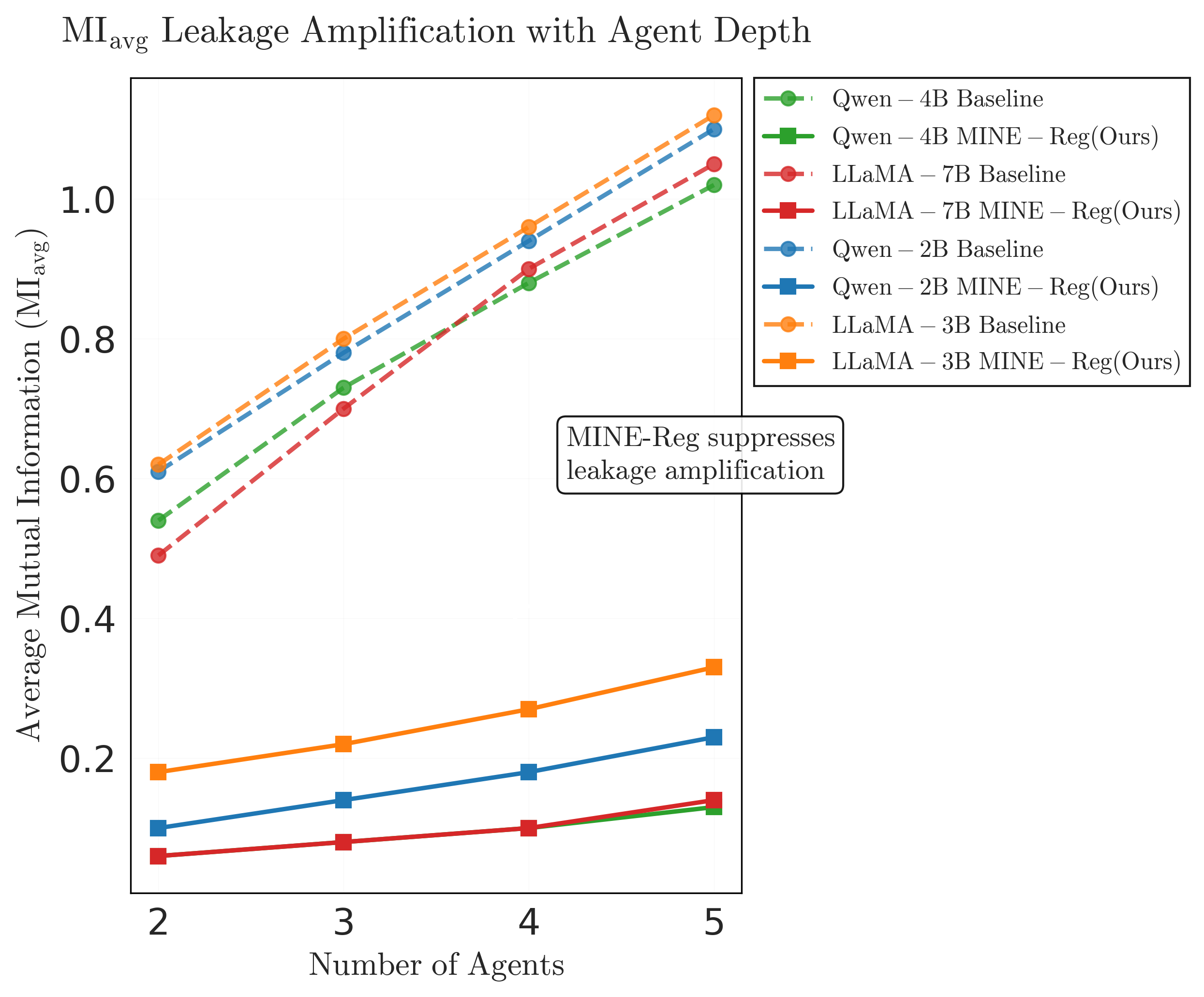}
    \caption{
    $\mathrm{MI}_{\text{avg}}$  leakage as a function of sequential agent depth across different models for MedQA benchmark. Unregularized systems exhibit strong leakage amplification with depth,
    while MI-regularized training effectively suppresses cumulative leakage
    across all model families.
    }
    \label{fig:mi_leakage_depth}
\end{figure}
Figure~\ref{fig:mi_leakage_depth} illustrates how representation-level $\mathrm{MI}_{\text{avg}}$ evolves with sequential agent depth for Qwen-2B, Qwen-4B,
LLaMA-3B, and LLaMA-7B models.
Across all model families, unregularized baselines exhibit clear leakage
amplification: $\mathrm{MI}_{\text{avg}}$ increases monotonically as the number of agents grows from
two to five, consistent with compounding disclosure effects predicted by
Theorem~\ref{thm:cumulative}.
For example, on MedQA with LLaMA-7B, $\mathrm{MI}_{\text{avg}}$ rises from $0.49$ at two agents to $1.05$
at five agents, with similar trends observed for Qwen variants.

In contrast, MI-regularized training substantially suppresses this amplification
across all models.
Under MINE-Reg, $\mathrm{MI}_{\text{avg}}$ remains bounded and grows only mildly with depth, staying
below approximately $0.35$ even at five agents.
This separation between baseline and regularized curves highlights that local
per-agent constraints alone are insufficient in deep pipelines, and that
explicit system-level MI regularization is necessary to control cumulative
privacy leakage under sequential composition.

\begin{figure}[t]
    \centering
    \includegraphics[width=0.9\columnwidth]{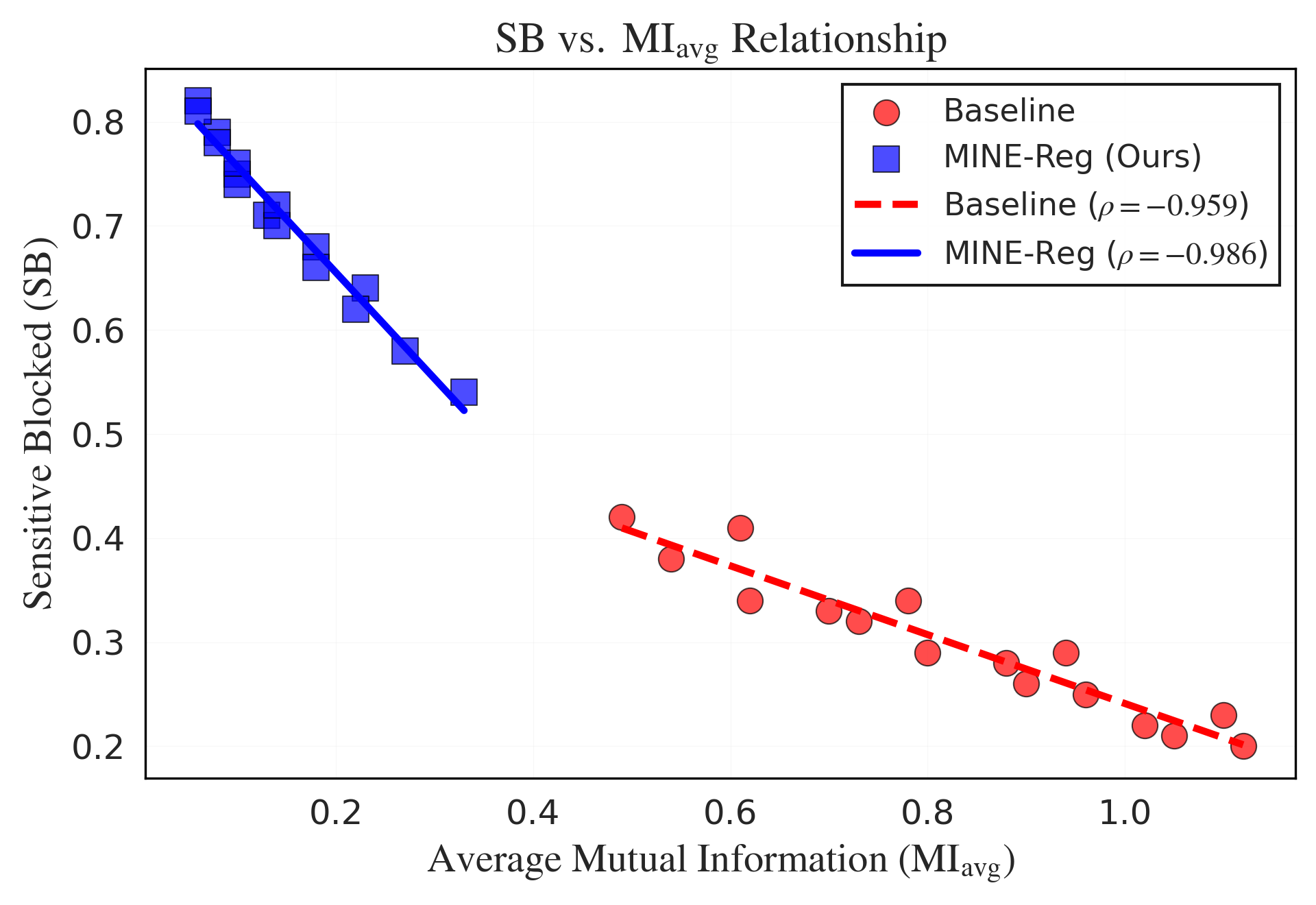}
    \caption{
    Relationship between $\mathrm{MI}_{\text{avg}}$ and
    SB for Qwen-4B model on FinQA benchmark.
    Strong negative correlations validate the information-theoretic formulation:
    reducing MI directly improves privacy, with steeper gains under
    MI-regularized training.
    }
    \label{fig:sb_vs_mi}
\end{figure}

\subsection{Information-Theoretic Validation}

Figure \ref{fig:sb_vs_mi} empirically validates the information-theoretic foundation of our approach by showing a strong negative relationship between $\mathrm{MI}_{\text{avg}}$ and SB for the Qwen-4B model on the FinQA benchmark. Both unregularized baselines and MINE-Reg exhibit a strong negative correlation
between $\mathrm{MI}_{\text{avg}}$ and SB, with Pearson coefficients of $\rho=-0.959$ and $\rho=-0.986$,
respectively.
This near-linear dependence confirms that reducing $\mathrm{MI}_{\text{avg}}$ directly improves privacy
outcomes, supporting it as a reliable proxy for leakage risk.
Notably, the steeper slope under MI-regularization indicates heightened
sensitivity in low-leakage regimes, where small reductions in $\mathrm{MI}_{\text{avg}}$ yield
disproportionately large gains in privacy protection.These findings empirically support the cumulative leakage analysis in
Section~\ref{sec:theoretical-analysis}, demonstrating that controlling representation-level MI
effectively bounds downstream privacy risk.

\section{Conclusion and Future Work}
We analyzed privacy leakage in sequential multi-agent LLM pipelines and showed that local privacy guarantees fail to compose, leading to amplified global leakage. To address this, we introduced MI-regularized training (MINE-Reg), a system-level approach that directly constrains information flow at intermediate representations. Both theoretical analysis and experiments on medical and financial benchmarks demonstrate that MINE-Reg effectively suppresses cumulative leakage while preserving task utility, yielding large gains in overall system reliability.

Several directions remain open. Extending the framework to dynamic or adaptive agent pipelines would better reflect real-world deployments. Incorporating leakage-aware orchestration and studying interactions with other privacy mechanisms, such as differential privacy or secure inference, are promising avenues. Finally, applying this approach to larger foundation models and multimodal agent systems remains an important direction for future work.

\bibliography{example_paper}
\bibliographystyle{icml2026}
\clearpage
\appendix
\newpage
\section{Appendix}
\subsection{Proof of Theorem~\ref{thm:cumulative}}
\label{app:proof_cumulative}

We provide the full derivation of the cumulative leakage bound stated in
Theorem~\ref{thm:cumulative}. The proof relies on repeated applications of the
chain rule for mutual information (MI) and the conditional data processing inequality as given below.
\paragraph{Model Setup.}
We consider the sequential multi-agent system defined in
Section~3.1 and briefly recall the notation and assumptions required for the
theoretical analysis.

Let $S_i$ denote the agent-local sensitive variable accessible only to agent
$a_i$, $O_i$ the intermediate output produced by agent $a_i$, and $D_i$ the
task-relevant or public input available at stage $i$.
Agent outputs satisfy the update rule
$O_i = A_i(O_{i-1}, D_i, S_i)$, with $O_0 = \varnothing$.

We assume that the sensitive variables $\{S_i\}_{i=1}^N$ are mutually independent
and that the sequential pipeline satisfies the conditional Markov structure defined in Eq.~\eqref{eq:markov}.
Finally, each agent satisfies a local information leakage constraint as
\begin{equation}
I(O_i; S_i) \leq \epsilon_i .
\label{eq:local_privacy}
\end{equation}

\paragraph{Objective.}
Our goal is to upper bound the total compositional leakage:
\begin{equation}
I(O_N; S_1, \dots, S_N),
\end{equation}
in terms of the per-agent leakage budgets $\{\epsilon_i\}_{i=1}^N$.

\paragraph{Chain Rule Expansion.}

By the chain rule for MI:
\begin{equation}
I(O_N; S_1, \dots, S_N)
=
\sum_{i=1}^N I(O_N; S_i \mid S_{<i}).
\label{eq:chain_rule}
\end{equation}

\paragraph{Conditional Data Processing.}
\begin{definition}[Conditional Data Processing Inequality]
\label{def:cdpi}
If $X \rightarrow Y \rightarrow Z$ forms a Markov chain conditioned on $W$, then
\[
I(X; Z \mid W) \leq I(X; Y \mid W).
\]
\end{definition}

Applying Definition~\ref{def:cdpi} to the conditional Markov chain in
Eq.~\eqref{eq:markov}, we obtain:
\begin{equation}
I(O_N; S_i \mid S_{<i}) \leq I(O_i; S_i \mid S_{<i}).
\label{eq:cdpi_applied}
\end{equation}

Substituting into Eq.~\eqref{eq:chain_rule} yields:
\begin{equation}
I(O_N; S_1, \dots, S_N)
\leq
\sum_{i=1}^N I(O_i; S_i \mid S_{<i}).
\label{eq:reduced_sum}
\end{equation}
\paragraph{Conditional Leakage Decomposition.}

We define:
\begin{equation}
\mathcal{L}_{{i}} := I(O_i; S_i \mid S_{<i}).
\label{eq:def_ai}
\end{equation}

Since $S_i$ is independent of  $S_{<i}$, we have:
\begin{align}
\mathcal{L}_{{i}}
&= H(S_i \mid S_{<i}) - H(S_i \mid O_i, S_{<i}) \nonumber \\
&= H(S_i) - H(S_i \mid O_i, S_{<i}).
\end{align}

Similarly,
\begin{equation}
I(O_i; S_i) = H(S_i) - H(S_i \mid O_i).
\end{equation}

Subtracting the two expressions yields:
\begin{align}
\mathcal{L}_{{i}} - I(O_i; S_i)
&= H(S_i \mid O_i) - H(S_i \mid O_i, S_{<i}) \nonumber \\
&= I(S_i; S_{<i} \mid O_i).
\end{align}

Therefore,
\begin{equation}
\mathcal{L}_{{i}}
=
I(O_i; S_i)
+
I(S_i; S_{<i} \mid O_i).
\label{eq:ai_identity}
\end{equation}
\paragraph{Bounding Cross-Dependencies.}

\begin{lemma}[Upstream Leakage Bound]
\label{lem:upstream}
For each $i$, the following inequality holds:
\[
I(S_i; S_{<i} \mid O_i) \leq I(S_{<i}; O_{i-1}).
\]
\end{lemma}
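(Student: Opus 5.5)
The plan is to prove the inequality by enlarging the conditional mutual information into an unconditional one involving everything agent $a_i$ consumes. Then peel that quantity apart with the chain rule, using the update rule $O_i = A_i(O_{i-1}, D_i, S_i)$ and the independence of $S_i$ from upstream quantities. The key observation is that $O_i$ depends on $S_{<i}$ only through $O_{i-1}$, which is exactly what the Markov structure in Eq.~\eqref{eq:markov} encodes.

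\textbf{Step 1 (remove the conditioning).} By the chain rule,
\[
I(S_i, O_i; S_{<i}) = I(O_i; S_{<i}) + I(S_i; S_{<i} \mid O_i),
\]
so that
\[
I(S_i; S_{<i} \mid O_i) \le I(S_i, O_i; S_{<i}).
\]

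\textbf{Step 2 (data processing through agent $a_i$).} Since $O_i$ is a (possibly randomized) function of $(O_{i-1}, D_i, S_i)$ whose internal randomness is independent of $S_{<i}$, the chain $S_{<i} \to (O_{i-1}, D_i, S_i) \to (S_i, O_i)$ is Markov. This is Eq.~\eqref{eq:markov} with $S_i$ appended to the output. The data processing inequality then gives
\[
I(S_i, O_i; S_{<i}) \le I(O_{i-1}, D_i, S_i; S_{<i}).
\]

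\textbf{Step 3 (peel off $S_i$ and $D_i$).} Expand by the chain rule:
\[
I(O_{i-1}, D_i, S_i; S_{<i}) = I(S_i; S_{<i}) + I(O_{i-1}; S_{<i} \mid S_i) + I(D_i; S_{<i} \mid O_{i-1}, S_i).
\]
The first term vanishes by mutual independence of the $S_j$. For the second term, $O_{i-1}$ is generated from $(S_{<i}, D_{\le i-1})$ only. Hence $S_i$ is independent of the pair $(O_{i-1}, S_{<i})$, and the conditioning on $S_i$ can be dropped, leaving $I(O_{i-1}; S_{<i})$. The third term is zero provided the public input $D_i$ carries no information about $S_{<i}$ beyond $O_{i-1}$. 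For instance, this holds if the $D_j$ are exogenous and jointly independent of all $S_j$. Chaining Steps 1--3 yields the claim.

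\textbf{Main obstacle.} The delicate part is Step 3. Strictly speaking, the statement of Theorem~\ref{thm:cumulative} only asserts mutual independence of the $S_j$ together with Eq.~\eqref{eq:markov}. The peeling argument additionally needs two conditions:
\begin{itemize}
\item $S_i$ is independent of $(O_{i-1}, S_{<i})$;
\item $D_i$ is conditionally independent of $S_{<i}$ given $(O_{i-1}, S_i)$.
\end{itemize}
I would state these explicitly as the formalization of "public input" and "agent-local" in Section~3.1. I would then check that both follow from the generative model in which each $D_j$ and $S_j$ is an independent exogenous source and each $O_j$ is computed only from $(O_{j-1}, D_j, S_j)$. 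If one prefers not to assume $D_i$ independent of the sensitive variables, the fallback is to carry the extra term $I(D_i; S_{<i} \mid O_{i-1}, S_i)$ on the right-hand side.
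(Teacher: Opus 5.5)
Your argument is correct and is essentially the paper's proof. The paper likewise passes to $I(S_{<i};O_i\mid S_i)$ (which equals your $I(S_i,O_i;S_{<i})$ because $S_i$ is independent of $S_{<i}$), applies data processing back to $O_{i-1}$ via a chain $S_{<i}\to O_{i-1}\to O_i$ conditioned on $S_i$, and then drops the conditioning on $S_i$ ``by independence.'' You instead route the data processing through the full input $(O_{i-1},D_i,S_i)$ and peel off $S_i$ and $D_i$, which exposes the two conditions the paper uses without stating them (Eq.~\eqref{eq:markov} involves only $S_{i-1}$ and keeps $D_i$ in the conditioning). You are right that both must be assumed: for independent fair bits $S_1,S_2$, both $D_1=S_2$, $O_1=O_2=S_1\oplus S_2$ and $O_1$ constant, $D_2=S_1$, $O_2=S_1\oplus S_2$ satisfy mutual independence and Eq.~\eqref{eq:markov}, yet give $I(S_2;S_1\mid O_2)=1$ bit against $I(S_1;O_1)=0$.
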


\begin{proof}
Since $S_i$ is independent of $S_{<i}$, we have:
\[
I(S_i; S_{<i} \mid O_i)
=
I(S_{<i}; O_i \mid S_i) - I(S_{<i}; O_i).
\]
The second term is non-negative, yielding:
\[
I(S_i; S_{<i} \mid O_i)
\leq
I(S_{<i}; O_i \mid S_i).
\]
By the conditional Markov structure
$S_{<i} \rightarrow O_{i-1} \rightarrow O_i$ given $S_i$,
the data processing inequality implies:
\[
I(S_{<i}; O_i \mid S_i) \leq I(S_{<i}; O_{i-1} \mid S_i)
= I(S_{<i}; O_{i-1}),
\]
where the final equality follows from independence.
\end{proof}
\paragraph{Recursive Inequality.}

By Lemma~\ref{lem:upstream} and Eq.~\eqref{eq:ai_identity}:
\begin{equation}
\mathcal{L}_{{i}} \leq I(O_i; S_i) + I(S_{<i}; O_{i-1}).
\end{equation}

Using the chain rule:
\begin{equation}
I(S_{<i}; O_{i-1})
=
\sum_{j=1}^{i-1} I(S_j; O_{i-1} \mid S_{<j}).
\end{equation}

Applying conditional data processing again:
\[
I(S_j; O_{i-1} \mid S_{<j}) \leq I(S_j; O_j \mid S_{<j}) = \mathcal{L}_{{j}}.
\]

Thus,
\begin{equation}
\mathcal{L}_{{i}} \leq \epsilon_i + \sum_{j=1}^{i-1} \mathcal{L}_{{j}}.
\label{eq:recurrence}
\end{equation}
\paragraph{Solving the Recurrence.}

Define ${U}_{{k}} := \sum_{j=1}^k \mathcal{L}_{{j}}$. From Eqs. \eqref{eq:reduced_sum} and \eqref{eq:def_ai}, we note that 
\begin{align}\label{eq:upperboundmutinf}
I(O_N; S_1, \dots, S_N) \le U_N.    
\end{align}
From Eq.~\eqref{eq:recurrence}, we have:
\[
{U}_{{i}} = {U}_{{i-1}} + \mathcal{L}_{{i}} \leq \epsilon_i + 2{U}_{{i-1}}.
\]

Solving recursively yields:
\begin{align}
U_{{1}} &\leq \epsilon_1 \nonumber \\
U_{{2}} &\leq \epsilon_2 + 2\epsilon_1 \nonumber \\
U_{{3}} &\leq \epsilon_3 + 2\epsilon_2 + 4\epsilon_1 \nonumber \\
&\vdots \nonumber \\
U_{{N}} &\leq \sum_{i=1}^N 2^{N-i} \epsilon_i.
\end{align}
\paragraph{Conclusion.}

From Eq.~\eqref{eq:upperboundmutinf}, we obtain:
\begin{equation}
\boxed{
I(O_N; S_1, \dots, S_N)
\leq
\sum_{i=1}^N 2^{N-i} \epsilon_i
}
\end{equation}

In the uniform case $\epsilon_i = \epsilon$:
\[
I(O_N; S_1, \dots, S_N) \leq (2^N - 1)\epsilon.
\]

\subsection{Experimental Setup Details}
\label{app:exp_details}

\subsubsection{Tasks and Datasets}

\paragraph{MedQA (Medical Reasoning).}
We use the English subset of MedQA \citep{jin2020disease}, consisting of
multiple-choice clinical questions that require multi-step medical reasoning.
MedQA reflects high-stakes healthcare workflows, where intermediate reasoning
states may encode sensitive clinical attributes, implicit diagnoses, or
proprietary decision logic.

\paragraph{FinQA (Financial Reasoning).}
FinQA \citep{chen2021finqa} is a financial question-answering benchmark grounded
in real-world corporate reports.
The task requires numerical reasoning over tabular and textual evidence, making
it a privacy-sensitive setting where intermediate representations may encode
confidential financial indicators or internal analytical strategies.

\paragraph{PrivacyLens (Contextual Privacy Norms).}
PrivacyLens \citep{shao2024privacylens} evaluates privacy norm compliance in
action-based agent workflows.
Each data point is constructed hierarchically from a privacy-sensitive seed
(encoded as a five-tuple specifying data type, data subject, sender, recipient,
and transmission principle), which is expanded into a natural-language vignette
and an executable agent trajectory.
The trajectory consists of a sequence of agent actions and environment
observations that simulate a privacy-critical scenario, excluding the final
action.

In our experiments, agents consume the seed, vignette, and observed trajectory
and must generate the final action.
PrivacyLens evaluates whether this action leaks sensitive information specified
in the seed while maintaining task helpfulness, enabling realistic assessment of
sequential privacy leakage.
\subsubsection{Sequential Agent Architectures}

We evaluate sequential pipelines with $N \in \{2,3,4,5\}$ agents.
Each agent is an independent LLM with no shared parameters and communicates
exclusively through intermediate representations produced by upstream agents.

The agent roles are instantiated as follows:

\begin{itemize}
    \item \textbf{Agent $\boldsymbol{a_1}$} (\emph{Input Interpreter}): processes the user input
    (question or trajectory context) and produces an initial structured
    representation.
    \item \textbf{Agent $\boldsymbol{a_2}$} (\emph{Task Reasoner}): consumes upstream
    representations and performs core task-level reasoning or decision-making.
    \item \textbf{Agent $\boldsymbol{a_3}$} (\emph{Context Refiner}, when $N \ge 3$): refines and
    contextualizes intermediate outputs, potentially aggregating partial
    evidence or reformulating reasoning traces.
    \item \textbf{Agent $\boldsymbol{a_4}$} (\emph{Privacy-Aware Rewriter}, when $N \ge 4$):
    performs representation-level rewriting or filtering to reduce leakage of
    agent-local sensitive information while preserving task utility.
    \item \textbf{Agent $\boldsymbol{a_5}$} (\emph{Compliance Checker}, when $N = 5$): performs a
    final consistency and safety pass, ensuring that the output satisfies both
    task correctness and privacy constraints before release.
\end{itemize}

This design reflects realistic multi-agent deployments, where interpretation,
reasoning, refinement, and policy enforcement are handled by specialized
components operating sequentially. Importantly, increasing agent depth
introduces additional opportunities for compositional information leakage,
making this setting well-suited for studying privacy amplification effects.

\subsubsection{Models}

We evaluate two open-weight model families:
\begin{itemize}
    \item \textbf{Qwen}\footnote{
    \url{https://huggingface.co/Qwen/Qwen2-1.5B},
    \url{https://huggingface.co/Qwen/Qwen1.5-4B-Chat}
    }: 2B and 4B parameter variants
    \item \textbf{LLaMA}\footnote{
    \url{https://huggingface.co/meta-llama/Llama-3.2-3B},
    \url{https://huggingface.co/meta-llama/Llama-2-7b}
    }: 3B and 7B parameter variants
\end{itemize}

All agents are fine-tuned using LoRA \citep{hu2021lora} adapters with rank 8.
Unless stated otherwise, agents do not share LoRA parameters.

\subsubsection{Sensitive Variables}

Each agent $a_i$ is associated with an agent-local sensitive variable $S_i$,
which should not be inferable from its output $O_i$.

\begin{itemize}
    \item For MedQA and FinQA, $S_i$ includes embeddings of private prompts,
    internal routing states, or agent-specific policy indicators.
    \item For PrivacyLens, $S_i$ corresponds to privacy-sensitive seed elements,
    including data subject, recipient, and transmission principle.
\end{itemize}

\subsubsection{Training Objective}

The privacy-regularized objective is
\[
\mathcal{L}_{\text{total}}
=
\mathcal{L}_{\text{utility}}
+
\sum_{i=1}^{N} \beta_i \, \hat{I}_{\text{MINE}}(O_i; S_i),
\]
where $\hat{I}_{\text{MINE}}(O_i; S_i)$ is estimated using stabilized variational MI estimator (MINE). \citep{belghazi2018mine}
Privacy weights $\beta_i$ control the privacy-utility trade-off and are ablated
in Appendix~\ref{app:ablations}.

\subsubsection{Implementation Details}

All models are trained using AdamW \citep{loshchilov2019decoupled}  with learning rate $2\times10^{-5}$ and batch
size 16.
MI critics are two-layer MLPs with hidden dimension 256 and ReLU
activations.
All experiments are run with three random seeds and results are averaged.

\subsection{Detailed Evaluation Metrics}
\label{app:metrics}

This section provides formal definitions and implementation details for all
privacy and utility metrics used throughout the paper.
Our metrics are designed to jointly capture representation-level information
leakage, adversarial inference success, and task utility in sequential
multi-agent pipelines.
\subsubsection{Utility Metrics}

\paragraph{Cross-Entropy Loss (CE).}
For MedQA and FinQA, utility is primarily measured using the standard
CE loss between the model prediction $\hat{y}$ and the ground-truth
label $y$, where $x$ denotes the input query and $p_\theta(\cdot \mid x)$ is the
predictive distribution of the final agent parameterized by $\theta$:
\[
\mathrm{CE} = -\mathbb{E}_{(x,y)} \big[ \log p_{\theta}(\hat{y}=y \mid x) \big].
\]
Lower values indicate better task performance.

\paragraph{Benign Succeeded (BS).}
BS measures the fraction of benign (non-sensitive) inputs for
which the system produces a correct or helpful output, where
$N_{\text{benign}}$ denotes the number of benign evaluation instances and
$\mathbb{I}[\cdot]$ is the indicator function:
\[
\mathrm{BS}
=
\frac{1}{N_{\text{benign}}}
\sum_{i=1}^{N_{\text{benign}}}
\mathbb{I}[\hat{y}_i = y_i].
\]
For PrivacyLens, $\hat{y}_i = y_i$ is replaced by a helpfulness score
$\ge 2$ on the 0-3 rubric provided by the benchmark.
Higher values are better.

\subsubsection{Representation-Level Privacy Metrics}

\paragraph{Estimated Mutual Information.}
For each agent $a_i$, we estimate the MI between its output
representation $O_i$ and its agent-local sensitive variable $S_i$, where $O_i$
denotes the intermediate representation produced by agent $a_i$ and $S_i$
denotes the corresponding sensitive attribute accessible only at that stage:
\[
\hat{I}_{\text{MINE}}(O_i; S_i).
\]
MI is estimated using variational MI estimators
(MINE) \citep{belghazi2018mine}.
Lower MI indicates weaker statistical dependence and stronger privacy protection.

\paragraph{Average Mutual Information.}
To summarize privacy leakage across the depth of a sequential pipeline, we
report the average MI
\[
\mathrm{MI}_{\text{avg}} =
\frac{1}{N}\sum_{i=1}^{N}\hat{I}_{\text{MINE}}(O_i; S_i),
\]
where $N$ denotes the number of agents in the pipeline and
$\hat{I}_{\text{MINE}}(O_i; S_i)$ is the estimated MI at
agent $a_i$.
\subsubsection{Adversarial Privacy Metrics}

\paragraph{Leakage Accuracy (LA).}
LA measures the success rate of an adversary $\mathcal{A}$ that
attempts to infer sensitive attributes from agent outputs.
Let $O_i$ denote the output representation produced by agent $a_i$,
$S_i$ the corresponding ground-truth sensitive attribute, and
$\hat{S}_i = \mathcal{A}(O_i)$ the adversary’s prediction obtained by applying
$\mathcal{A}$ to the agent output:
\[
\mathrm{LA}
=
\frac{1}{N_{\text{sensitive}}}
\sum_{i=1}^{N_{\text{sensitive}}}
\mathbb{I}[\hat{S}_i = S_i],
\]
where $N_{\text{sensitive}}$ denotes the number of sensitive evaluation
instances and $\mathbb{I}[\cdot]$ is the indicator function.
Lower values indicate stronger privacy protection.

\paragraph{Sensitive Blocked (SB).}
SB measures the fraction of sensitive cases in which the
adversary fails to correctly infer the sensitive attribute:
\[
\mathrm{SB}
=
\frac{1}{N_{\text{sensitive}}}
\sum_{i=1}^{N_{\text{sensitive}}}
\mathbb{I}[\hat{S}_i \neq S_i]
= 1 - \mathrm{LA}.
\]
Higher values indicate stronger resistance to adversarial inference.

\subsubsection{Joint Privacy--Utility Metrics}

\paragraph{Balanced Outcome (BO).}
BO summarizes the privacy--utility trade-off by averaging benign
success and sensitive blocking:
\[
\mathrm{BO}
=
\frac{1}{2}(\mathrm{BS} + \mathrm{SB}).
\]
This metric penalizes methods that optimize privacy at the expense of utility,
or vice versa, by assigning equal weight to both objectives.

\paragraph{Overall Success (OS).}
OS measures whether a system simultaneously succeeds on benign
inputs and blocks sensitive leakage.
For paired benign/sensitive evaluation scenarios, let $N_{\text{pairs}}$ denote
the number of paired instances, $\text{BenignCorrect}_i \in \{0,1\}$ indicate
whether the benign instance is answered correctly, and
$\text{Leakage}_i \in \{0,1\}$ indicate whether sensitive information is leaked:
\[
\mathrm{OS}
=
\frac{1}{N_{\text{pairs}}}
\sum_{i=1}^{N_{\text{pairs}}}
\mathbb{I}[
\text{BenignCorrect}_i = 1
\;\wedge\;
\text{Leakage}_i = 0
].
\]
When paired evaluation is unavailable, we approximate OS as
\[
\mathrm{OS} \approx \mathrm{BS} \cdot \mathrm{SB},
\]
which assumes independence between benign correctness and leakage events.

\subsubsection{PrivacyLens-Specific Metric Mapping}

PrivacyLens evaluates action-based privacy leakage using Leakage Rate (LR) and
Adjusted Leakage Rate (LRh), computed over actions with positive helpfulness.
We map these quantities to our metrics as follows:
\begin{itemize}
    \item \textbf{Leakage Accuracy (LA)} := $\mathrm{LRh}$
    \item \textbf{Sensitive Blocked (SB)} := $1 - \mathrm{LRh}$
    \item \textbf{Benign Succeeded (BS)} := fraction of actions with helpfulness $\ge 2$
    \item \textbf{Overall Success (OS)} := $\mathrm{BS} \cdot \mathrm{SB}$
\end{itemize}
This mapping allows consistent comparison between representation-level MI
metrics and action-based privacy outcomes.

\subsubsection{Composite Metric: Privacy-Aware Reasoning Index (PARI)}

To summarize privacy--utility trade-offs in a single scalar metric, analogous to
the Responsible Reasoning Index (RRI) introduced by \citet{raza2025responsible},
we define the \emph{Privacy-Aware Reasoning Index (PARI)} as:
\[
\mathrm{PARI}
=
w_1 \cdot RC
+ w_2 \cdot OT
+ w_3 \cdot PI
+ w_4 \cdot RT.
\]

PARI is intended as a system-level summary metric that aggregates multiple
existing privacy and utility signals, rather than introducing new primitive
measurements.

\paragraph{Reasoning Correctness (RC).}
We define reasoning correctness directly in terms of benign task success using
the Benign Succeeded (BS) metric introduced earlier:
\[
RC := \mathrm{BS}.
\]
This choice avoids introducing a separate correctness score and ensures direct
consistency between task-level utility evaluation and the composite metric.
\begin{figure}[H]
\vspace{-6pt}
\centering
\includegraphics[width=0.85\columnwidth]{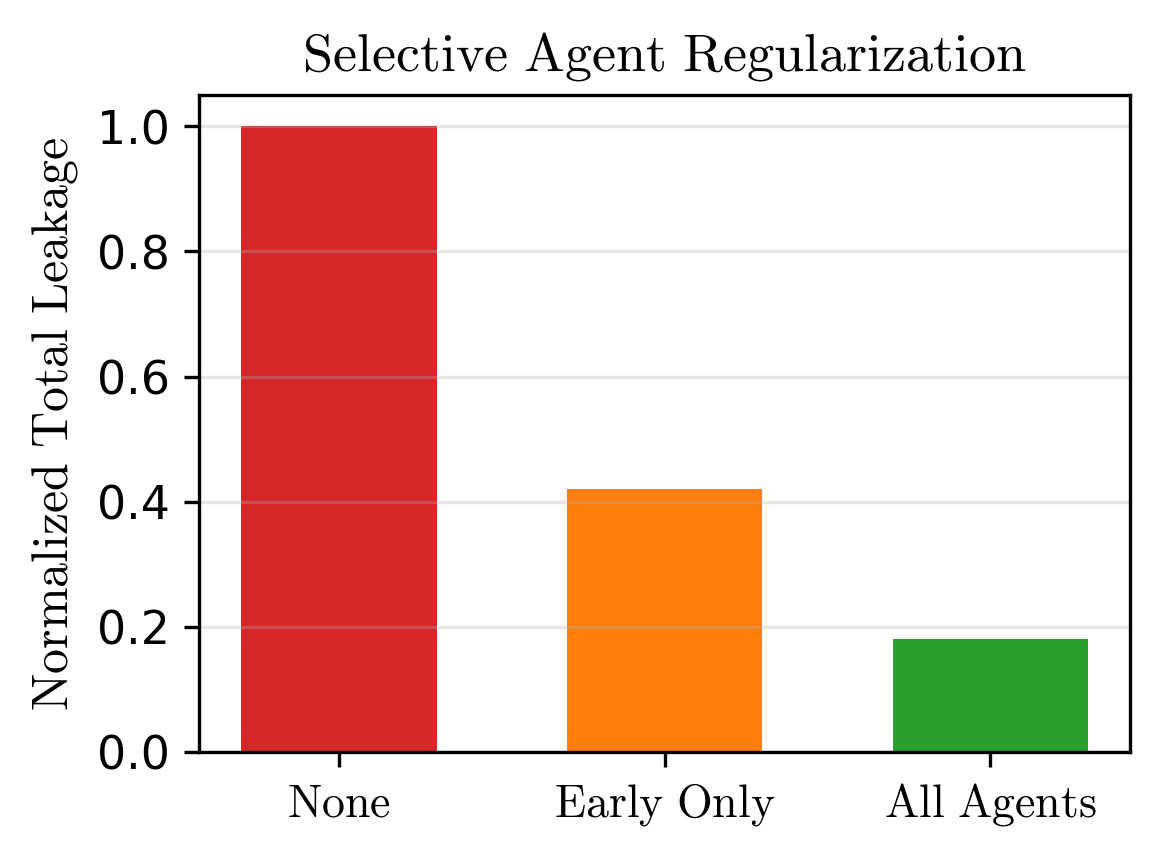}
\vspace{-8pt}
\caption{Total compositional leakage under selective agent regularization.}
\label{fig:selective}
\end{figure}

\paragraph{Operational Transparency (OT).}
Operational Transparency measures the stability of intermediate representations
under benign input perturbations.
Let $q$ denote a benign input query, $q+\epsilon$ a semantically equivalent
perturbation, and $O(\cdot)$ the final system representation produced by the
agent pipeline.
We define:
\[
OT := 1 - \mathbb{E}\big[\|O(q) - O(q+\epsilon)\|_2\big].
\]
In our experiments, representation stability remains consistent across methods;
thus, OT is treated as constant and does not affect relative comparisons.

\paragraph{Privacy Integrity (PI).}
Privacy Integrity quantifies the relative reduction in representation-level
leakage.
Let $\mathrm{MI}_{\text{avg}}$ denote the average estimated MI
across agents for a given method, and $\mathrm{MI}_{\text{baseline}}$ the
corresponding value for the unregularized baseline.
We define:
\[
PI := 1 - \frac{\mathrm{MI}_{\text{avg}}}{\mathrm{MI}_{\text{baseline}}}.
\]

\paragraph{Reproducibility and Traceability (RT).}
Reproducibility and Traceability capture whether system execution is deterministic
and auditable.
Since all experiments enforce fixed routing, deterministic decoding, and
comprehensive logging, this component is fixed:
\[
RT := 1.
\]

\paragraph{Interpretation and Metric Mapping.}
PARI does not introduce new evaluation primitives; rather, it aggregates
previously defined metrics into a single scalar score.
Specifically, RC corresponds directly to BS, PI is a normalized
function of average representation-level MI, OT captures representation stability under benign
perturbations, and RT encodes deterministic execution.
This decomposition allows PARI to summarize the privacy--utility trade-off while
preserving interpretability in terms of underlying metrics.

\paragraph{Weights.}
Unless otherwise stated, we use:
\[
(w_1, w_2, w_3, w_4) = (0.3, 0.1, 0.5, 0.1),
\]
reflecting the central importance of privacy integrity in our evaluation.

\subsection{Additional Experiments on PrivacyLens}
\label{app:add-exp}

We provide additional validation of our framework on the
\textbf{PrivacyLens} benchmark~\citep{shao2024privacylens},
which is specifically designed to evaluate privacy leakage and
adversarial inference risks in language models.
Unlike MedQA and FinQA, PrivacyLens does not emphasize task-level reasoning,
but instead probes the model’s ability to resist sensitive attribute inference
under controlled adversarial settings.

To isolate privacy behavior in resource-constrained regimes, we report results
only for \textbf{Qwen-2B} and \textbf{LLaMA-3B}.
All evaluation metrics follow the same definitions as in
Appendix~\ref{app:metrics}, and all MI estimates are computed
using frozen estimators.
\subsubsection{PrivacyLens Results}

\paragraph{Discussion.}
The PrivacyLens results reinforce the core findings of the main paper.
Across both Qwen-2B and LLaMA-3B, MI-regularized training reduces $\mathrm{MI}_{\text{avg}}$ by approximately \textbf{65--80\%} relative to unregularized
baselines, leading to substantial improvements in resistance to sensitive
attribute inference.
SB increases consistently across agent depths, while
LA decreases in near-linear correspondence with $\mathrm{MI}_{\text{avg}}$.

As expected, absolute benign performance (BS) is slightly lower than in
MedQA and FinQA, reflecting the adversarial nature of PrivacyLens.
Nevertheless, utility degradation remains moderate, and the resulting
OS improves by more than an order of magnitude in most
settings.
These results confirm that the benefits of MI-regularization extend beyond
task-centric benchmarks to dedicated privacy stress tests.

Importantly, the observed trends mirror those in the main results:
unregularized systems exhibit rapid leakage amplification with agent depth,
while MI-regularization effectively suppresses cumulative leakage.
This consistency across benchmarks strengthens the case for MI
as a principled, system-level control mechanism for privacy in sequential
multi-agent LLM deployments.

\subsection{Ablation Studies}
\label{app:ablations}

We conduct a series of ablation studies to analyze the behavior of the proposed
privacy-regularized training framework and to validate key design choices.
Unless otherwise stated, all ablations are evaluated using validation accuracy
as the utility metric and estimated MI leakage
$\hat{I}_{\text{MINE}}(O_i; S_i)$ as the privacy metric.

\subsubsection{Effect of Privacy Weight $\beta$}

We first examine the effect of the privacy regularization strength by sweeping
$\beta \in \{0, 10^{-3}, 10^{-2}, 10^{-1}, 1\}$ with uniform allocation
$\beta_1 = \beta_2 = \beta$. This experiment isolates the role of $\beta$ as a
global control parameter governing the privacy-utility trade-off.

\begin{table}[t]
\centering
\caption{PrivacyLens results for lightweight models
(B = Baseline, M = MINE-Reg).}
\label{tab:privacylens_results}
\scriptsize
\setlength{\tabcolsep}{4pt}
\begin{tabular}{@{}l c @{} c c c c c c c @{}}
\toprule
\textbf{Model} & \textbf{Ag}\kern3pt & \textbf{Meth} &
\textbf{MI$_{\text{avg}}$} $\downarrow$ &
\textbf{LA} $\downarrow$ &
\textbf{SB} $\uparrow$ &
\textbf{BS} $\uparrow$ &
\textbf{OS} $\uparrow$ &
\textbf{PARI} $\uparrow$ \\
\midrule
Qwen-2B & 2 & B & 0.64 & 0.61 & 0.39 & 0.90 & 0.12 & 0.468 \\
Qwen-2B & 2 & M & 0.12 & 0.25 & 0.75 & 0.86 & 0.65 & 0.862 \\
Qwen-2B & 3 & B & 0.82 & 0.68 & 0.32 & 0.87 & 0.07 & 0.459 \\
Qwen-2B & 3 & M & 0.17 & 0.29 & 0.71 & 0.80 & 0.57 & 0.846 \\
Qwen-2B & 4 & B & 0.98 & 0.73 & 0.27 & 0.84 & 0.04 & 0.451 \\
Qwen-2B & 4 & M & 0.22 & 0.34 & 0.66 & 0.77 & 0.51 & 0.832 \\
\midrule
LLaMA-3B & 2 & B & 0.69 & 0.66 & 0.34 & 0.86 & 0.10 & 0.453 \\
LLaMA-3B & 2 & M & 0.21 & 0.37 & 0.63 & 0.78 & 0.49 & 0.781 \\
LLaMA-3B & 3 & B & 0.87 & 0.71 & 0.29 & 0.83 & 0.06 & 0.444 \\
LLaMA-3B & 3 & M & 0.26 & 0.41 & 0.59 & 0.75 & 0.44 & 0.776 \\
LLaMA-3B & 4 & B & 1.02 & 0.75 & 0.25 & 0.80 & 0.04 & 0.435 \\
LLaMA-3B & 4 & M & 0.31 & 0.45 & 0.55 & 0.72 & 0.40 & 0.764 \\
\bottomrule
\end{tabular}
\end{table}

\begin{figure}[H]
\vspace{-6pt}
\centering
\includegraphics[width=0.85\columnwidth]{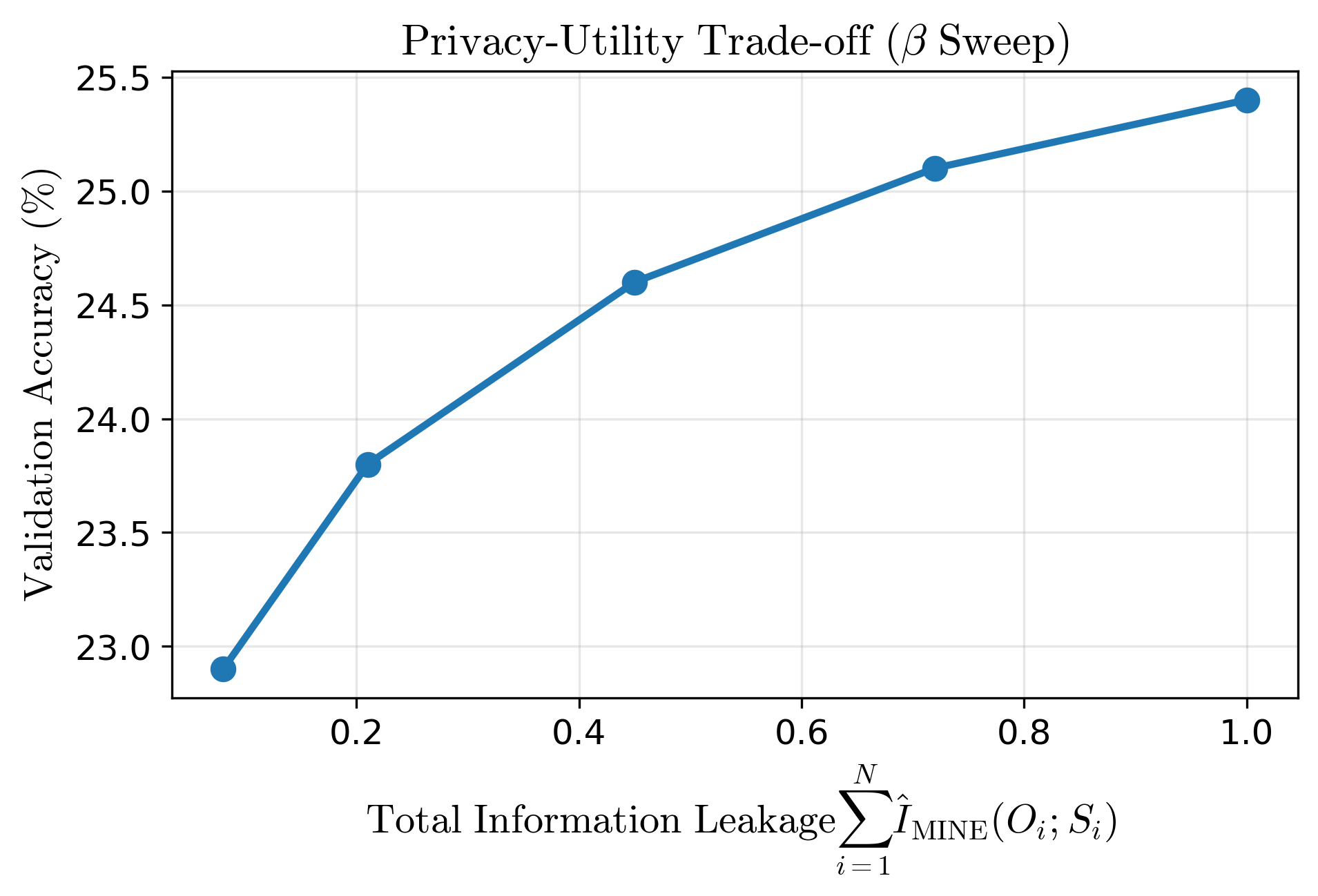}
\vspace{-8pt}
\caption{
Privacy-utility trade-off as a function of the privacy weight $\beta$,
measured by validation accuracy versus total information leakage
$\sum_{i=1}^{N} \hat{I}_{\text{MINE}}(O_i; S_i)$,
for $N{=}3$ sequential-agent pipeline using Qwen-2B evaluated on MedQA.
}

\label{fig:beta_sweep}
\end{figure}

Figure~\ref{fig:beta_sweep} reveals a smooth and monotonic trade-off between
privacy and utility. As $\beta$ increases, total information leakage decreases
consistently, while validation accuracy degrades gradually rather than
catastrophically. Notably, the utility curve exhibits diminishing sensitivity
beyond moderate leakage reductions, indicating that a substantial fraction of
sensitive information can be removed before task-relevant performance is
severely affected.

This behavior suggests that $\beta$ functions as a stable and interpretable
privacy knob, enabling practitioners to select operating points that achieve
meaningful privacy gains without incurring abrupt performance collapse. The
absence of sharp phase transitions further indicates that the optimization
landscape remains well-conditioned across a wide range of privacy budgets.

\subsubsection{Selective Agent Regularization}

We study whether privacy regularization must be applied to all agents or only
to a subset of the pipeline. We compare three settings:
(i) no agents regularized,
(ii) only early agents regularized, and
(iii) all agents regularized.
All settings use the same total privacy budget.

Figure~\ref{fig:selective} shows that regularizing only early agents substantially
reduces total leakage relative to the unregularized baseline, highlighting the
importance of controlling information release at early stages. However,
significant residual leakage persists due to amplification in later agents that
remain unconstrained.

Full regularization across all agents consistently achieves the lowest leakage,
demonstrating that privacy risks are distributed throughout the pipeline rather
than localized to initial stages. This result emphasizes the need for
system-level privacy control and cautions against partial defenses that leave
downstream agents unregulated.

\subsubsection{Representation Compression Effects}

We analyze how MI regularization affects the information content
of intermediate representations. Specifically, we measure
$I(O_i; y)$ (task relevance) and $I(O_i; S_i)$ (privacy leakage) as $\beta$
increases.

\begin{figure}[H]
\vspace{-6pt}
\centering
\includegraphics[width=0.85\columnwidth]{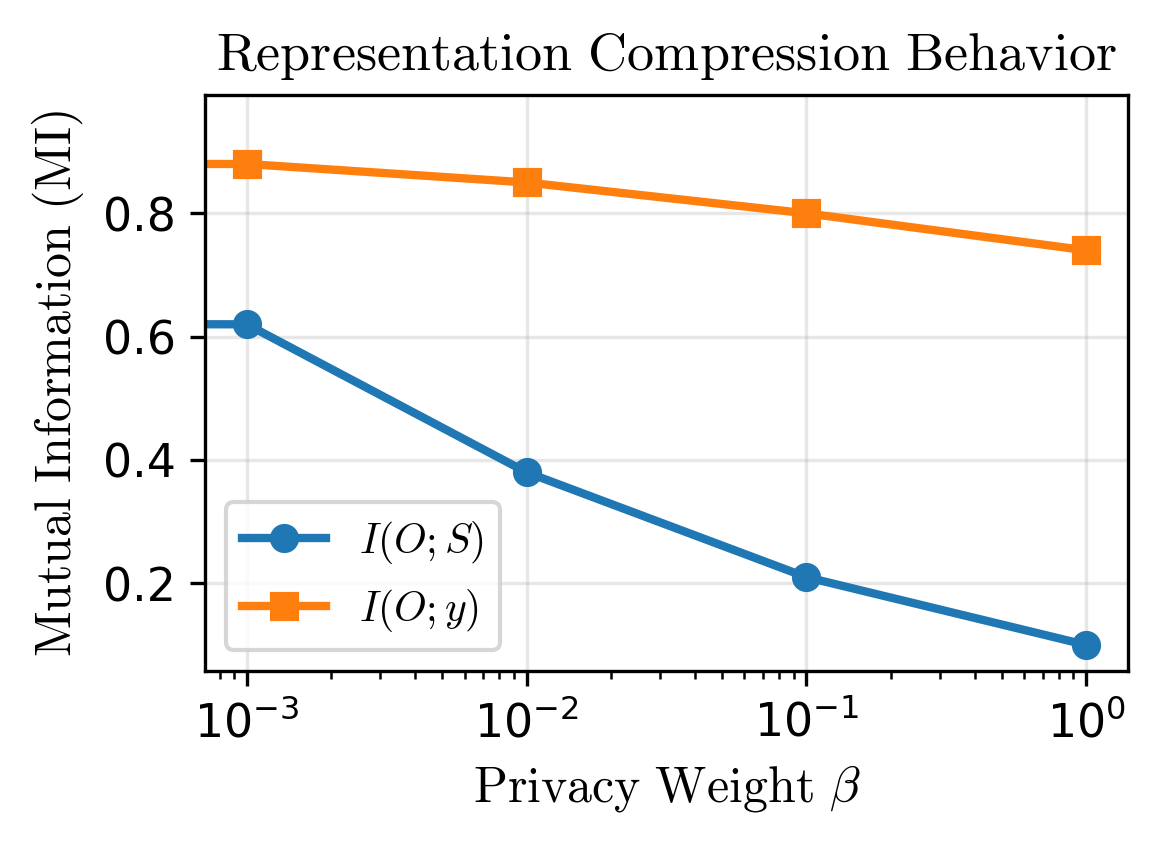}
\vspace{-8pt}
\caption{Effect of privacy regularization on task-relevant and sensitive
information in intermediate representations.}
\label{fig:repr_tradeoff}
\end{figure}

Figure~\ref{fig:repr_tradeoff} illustrates that increasing $\beta$ leads to a
pronounced reduction in $I(O_i; S_i)$, while $I(O_i; y)$ decreases much more
gradually. This asymmetric compression indicates that the regularizer
preferentially suppresses sensitive information rather than indiscriminately
collapsing representations.

The preservation of task-relevant signal explains the smooth utility degradation
observed in earlier experiments and suggests that the model learns to discard
privacy-sensitive features while retaining predictive structure. This behavior
aligns with the intended role of information-theoretic regularization as a
selective bottleneck rather than a blunt constraint.

\subsubsection{Robustness to Adversarial Inference}

Finally, we evaluate whether reductions in MI translate into
practical robustness against inference attacks. We train a lightweight
adversarial probe to predict the sensitive attribute $S_i$ from agent outputs
$O_i$ under varying privacy budgets.

\begin{figure}[H]
\vspace{-6pt}
\centering
\includegraphics[width=0.85\columnwidth]{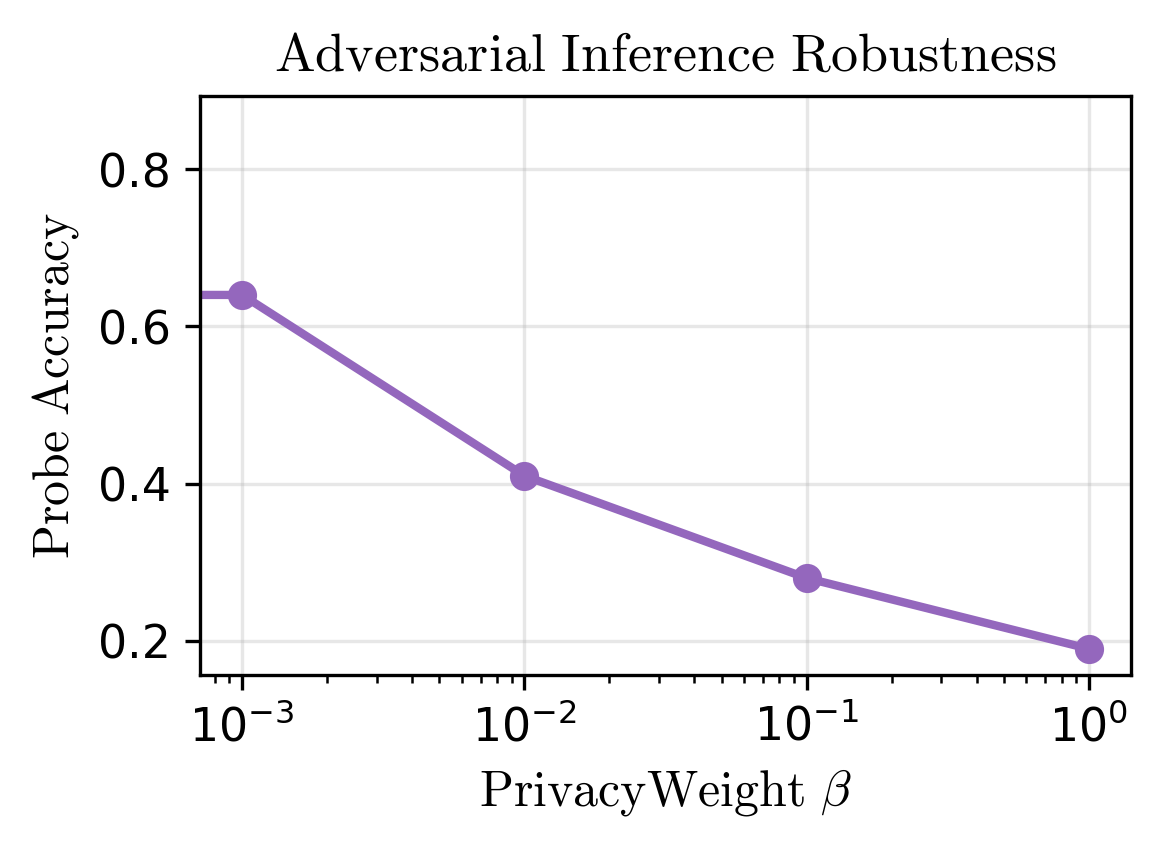}
\vspace{-8pt}
\caption{Adversarial probe accuracy for predicting sensitive variables from
agent outputs.}
\label{fig:probe}
\end{figure}

As shown in Figure~\ref{fig:probe}, adversarial probe accuracy decreases
monotonically as $\beta$ increases, closely tracking the reduction in estimated
MI. At higher privacy weights, probe performance approaches
chance level, indicating that sensitive attributes are no longer reliably
recoverable from intermediate representations.

This result confirms that MI regularization yields operational privacy benefits
beyond metric-level improvements, directly limiting the ability of downstream
attackers to exploit latent representations for sensitive inference.

\subsection{Extended Related Work}
\label{app:related_work_extended}

\subsubsection{Memorization and Training-Data Privacy}

Early work on LLM privacy focused on memorization-based risks, demonstrating that
models can reproduce rare or repeated training examples verbatim
\citep{xu2025positional, nasr2025scalable}.
Membership inference attacks further show that adversaries can determine whether
a specific record was used during training
\citep{shokri2017membership}.
Subsequent work explores mitigating these risks using differential privacy during
training \citep{li-etal-2025-1, kulynych2025unifying}.
While these methods provide strong record-level guarantees, they operate at
training time and do not address inference-time leakage arising from contextual
reasoning or interaction history.

\subsubsection{Inference-Time Privacy and User Profiling}

Beyond memorization, inference-time privacy attacks infer sensitive attributes
such as demographics, preferences, or intentions from user prompts
\citep{patil2025sum, zhang2025effective}.
User profiling literature similarly studies attribute inference from text
\citep{ yang2025agentnet,gomaa2025converse}.
These attacks typically assume a single model interacting with a single user,
and do not capture risks that emerge when partial information is distributed
across multiple agents.

\subsubsection{Contextual Integrity and Privacy Norms}

Contextual integrity theory posits that privacy violations arise when information
flows deviate from social norms defined by actors, information type, and
transmission principles \citep{gomaa2025converse, rosser2025agentbreeder}.
Recent work applies this framework to LLMs, emphasizing norm-aware decision making
in dialogue systems.
PrivacyLens operationalizes these ideas by constructing privacy-sensitive seeds,
vignettes, and trajectories for action-based evaluation
\citep{shao2024privacylens}.
However, these evaluations focus on final actions and do not quantify how
intermediate representations accumulate privacy risk across agent pipelines.

\subsubsection{Multi-Agent Systems and Memory Risks}

LLM-based multi-agent systems have gained traction for collaborative reasoning,
tool use, and autonomous workflows
\citep{talebirad2023multiagent, khan2025generative}.
A critical component of these systems is shared or persistent memory, which
stores intermediate reasoning traces and observations.
Recent work highlights that agent memory can be exploited through poisoning or
indirect manipulation, enabling targeted misbehavior without direct access to
the memory store \citep{chen2024agentpoison, srivastava2025memorygraft}.
Other studies show that agent topologies and communication patterns introduce
unique vulnerabilities absent in single-agent systems.

Despite these advances, existing defenses largely rely on heuristic filtering,
memory sanitization, or access control, and lack formal guarantees against
privacy leakage arising from sequential composition.
Our work complements this literature by providing a principled, representation-
level analysis of privacy leakage and a training-time mitigation strategy that
scales with agent depth.

\end{document}